
\documentclass[nohyperref]{article}
\usepackage{microtype}
\usepackage{graphicx}
\usepackage{subfigure}
\usepackage{booktabs} 

\usepackage{hyperref}



\usepackage[accepted]{icml2023}

\usepackage{amsmath}
\usepackage{amssymb}
\usepackage{mathtools}
\usepackage{amsthm}

\usepackage[capitalize,noabbrev]{cleveref}

\theoremstyle{plain}
\newtheorem{theorem}{Theorem}[section]

\newtheorem{lemma}[theorem]{Lemma}

\theoremstyle{definition}

\theoremstyle{remark}

\usepackage[textsize=tiny]{todonotes}
\usepackage{amsmath}
\usepackage{graphics}
\usepackage{epsfig}
\usepackage{multirow}
\usepackage{xspace}
\usepackage{subfigure}
\usepackage{fancyhdr}
\usepackage{microtype}
\usepackage{color}
\usepackage{colortbl}
\usepackage{xcolor}
\definecolor{mygray}{gray}{.85}
\definecolor{myyellow}{RGB}{204,102,0}
\definecolor{myred}{RGB}{204,0,102}
\definecolor{mypurple}{RGB}{102,0,204}
\usepackage{amsmath}
\usepackage{graphics}
\usepackage{epsfig}
\usepackage{multirow}
\usepackage{xspace}
\usepackage{subfigure}
\usepackage{fancyhdr}
\usepackage{amssymb}
\usepackage{amsthm}
\usepackage{mathrsfs}
\usepackage{booktabs}
\usepackage{arydshln}
\usepackage{enumitem}
\DeclareMathOperator*{\argmax}{argmax} 

\usepackage{algorithm}
\usepackage{algorithmic}

\newcommand{\model}{IsEM-Pro\xspace}

\icmltitlerunning{Importance Weighted Expectation-Maximization for Protein Sequence Design}

\begin{document}

\twocolumn[
\icmltitle{Importance Weighted Expectation-Maximization for Protein Sequence Design}




\begin{icmlauthorlist}
\icmlauthor{Zhenqiao Song}{comp}
\icmlauthor{Lei Li}{comp}
\end{icmlauthorlist}

\icmlaffiliation{comp}{Department of Computer Science, University of California, Santa Barbara, California, the United States}

\icmlcorrespondingauthor{Zhenqiao Song}{zhenqiao@ucsb.edu}
\icmlcorrespondingauthor{Lei Li}{leili@cs.ucsb.edu}

\icmlkeywords{Machine Learning, ICML}

\vskip 0.3in
]



\printAffiliationsAndNotice{}  

\begin{abstract}
Designing protein sequences with desired biological function is crucial in biology and chemistry.
Recent machine learning methods use a surrogate sequence-function model to replace the expensive wet-lab validation.
How can we efficiently generate diverse and novel protein sequences with high fitness?
In this paper, we propose \model, an approach to generate protein sequences towards a given fitness criterion.
At its core, \model is a latent generative model, augmented by combinatorial structure features from a separately learned Markov random fields (MRFs). We develop an Monte Carlo Expectation-Maximization method (MCEM) to learn the model. During inference, sampling from its latent space enhances diversity while its MRFs features guide the exploration in high fitness regions.
Experiments on eight protein sequence design tasks show that our \model outperforms the previous best methods by at least 55\% on average fitness score and generates more diverse and novel protein sequences. The code is available at \url{https://github.com/JocelynSong/IsEM-Pro.git}

\end{abstract}

\section{Introduction}
\label{introduction}
Protein engineering aims to discover protein variants with desired biological function, such as fluorescence intensity \cite{biswas2021low}, enzyme activity \cite{fox2007improving}, and therapeutic efficacy \cite{lagasse2017recent}.
Protein sequences embody their function through spontaneous folding of amino-acid sequences into three dimensional structures  \cite{go1983theoretical,chothia1984principles,starr2017exploring}.
The mapping from protein sequence to functional property forms a protein fitness landscape that characterizes the protein functional levels, such as the capability to catalyze reaction or bind a specific ligand \cite{romero2009exploring,ren2022proximal}.
Traditional approaches to design a specific protein with desired fitness objective involve obtaining protein variants by random mutagenesis \cite{labrou2010random} or recombination in laboratory experiments \cite{ma2003production}.
These variants are screened and selected in wet-lab experiments \cite{arnold1998design} as illustrated in Figure \ref{figure1}.

\begin{figure}[t]
\begin{minipage}[t]{0.5\linewidth}
\centering
\includegraphics[width=4.1cm]{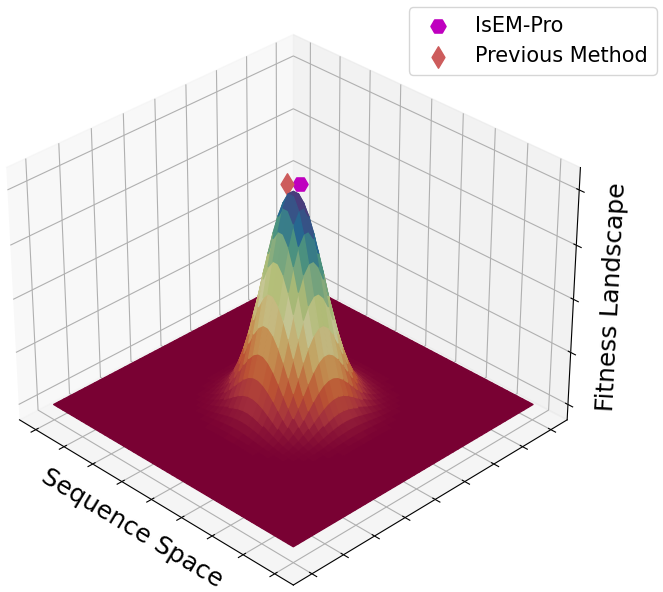}
\centerline{(a) Fujiyama landscape.}
\end{minipage}%
\begin{minipage}[t]{0.5\linewidth}
\centering
\includegraphics[width=4.1cm]{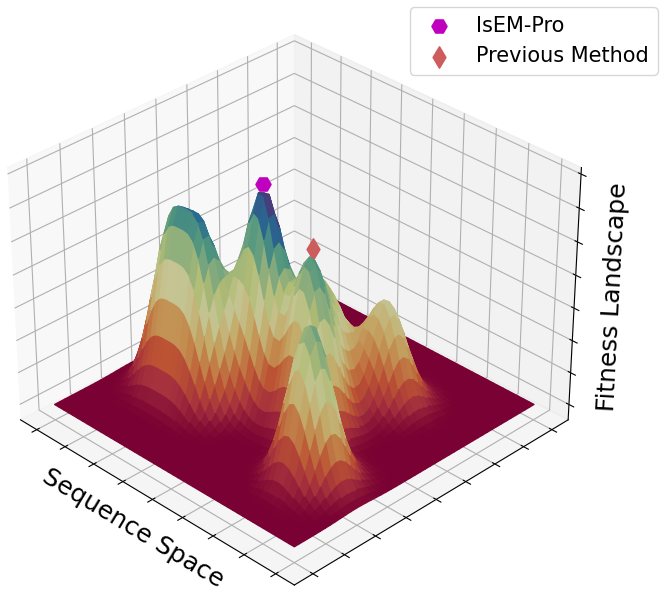}
\centerline{(b) Badlands landscape.}
\end{minipage}
\vspace{-1em}
\caption{Protein Fitness Landscape: distribution of a functional property for proteins).
Protein may exhibit single-peaked fitness landscape (Fujiyama landscape (a)) or multi-peaked landscape (Badlands landscape (b)) \cite{kauffman1989nk}.
In Fujiyama landscape, any method could perform well. However, for the rougher Badlands landscape, previous methods get trapped in a worse local optima while our proposed \model can climb much closer to the global optima through the iterative sampling in the latent space.} 
 \label{figure2}
\end{figure}

\begin{figure*}[ht]
  \centering
  \includegraphics[width=13.0cm]{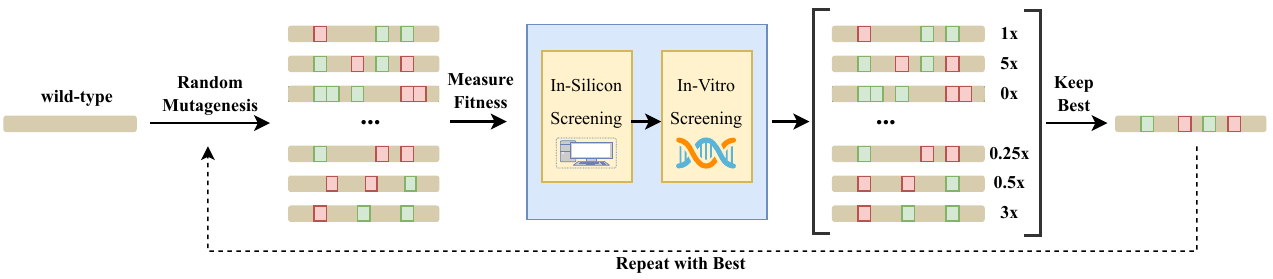}
  \vspace{-1em}
  \caption{Workflow of traditional protein sequence design. We aim to accelerate this process by directly generating desirable sequences.}
  \label{figure1}
\end{figure*}

However, these approaches requires iterative cycles of random mutagenesis and wet-lab validation, which are both money-consuming and time-intensive. 
Recent machine learning methods attempt to build a surrogate model of protein fitness landscape to accelerate expensive wet-lab screening \cite{luo2021ecnet,meier2021language}.
How can we efficiently discover satisfactory proteins over the exponentially large discrete space?
Ideal protein molecules should be novel, diverse and exhibit high fitness.
On one hand, designing novel and diverse protein sequences can uncover new functions and also lead to functional diversification \cite{singh2016plant}.
On the other hand, in addition to the evolutionary pressure for a protein to conserve specific positions of its sequence for function, the diversification of protein sequences can avoid undesired inter-domain association such as misfolding \cite{wright2005importance}.


 In this paper, we propose an \textbf{I}mportance \textbf{s}ampling based \textbf{E}xpectation-\textbf{M}aximization (EM) method to efficiently design novel, diverse and desirable \textbf{Pro}tein sequences (\model). 
Specifically, we introduce a latent variable in the generative model to capture the inter-dependencies in protein sequences. Sampling in latent space leads to more diverse candidates and can escape from locally optimal fitness regions.
Instead of using standard variational inference models such as variational auto-encoder (VAE) \cite{kingma2013auto}, we leverage importance sampling inside the EM algorithm to learn the latent generative model.
As illustrated in Figure \ref{figure2}, our approach can navigate through multiple local optimum, and yield better overall performance.
We further incorporate combinatorial structure of amino acids in protein sequences using Markov random fields (MRFs).
It guides the model towards higher fitness landscape, leading to faster uphill path to desired proteins.

We carry out extensive experiments on eight protein sequence design tasks and compare the proposed method with previous strong baselines. The contribution are as follows: 
\begin{itemize}[nosep]
\item We propose a structure-enhanced latent generative model for protein sequence design.
\item We develop an efficient method to learn the proposed generative model, based on importance sampling inside the EM algorithm.
\item Experiments on eight protein datasets with different objectives demonstrate that our \model generates protein sequences with at least 55\% higher average fitness score and higher diversity and novelty than previous best methods. Further analyse show that the protein sequences designed by our model can fold stably, giving empirical evidence that our proposed \model has the ability to generate real proteins.
\end{itemize}



\section{Background}
\label{background}

\subsection{Protein Sequence Design upon Wild-Type }
The protein sequence design problem is to search for a sequence with desired property in the sequence space $\mathcal{V}^L$, where $\mathcal{V}$ denotes the vocabulary of amino acids and L denotes the desired sequence length. 
The target is to find a protein sequence with highest fitness given by a protein fitness function $f: \mathcal{V}^L \rightarrow \mathbb{R}$, which can be measured through wet-lab experiments. 
Wild-type refers to protein occurring in nature.
Evolutionary search based methods are widely used \cite{bloom2009light,arnold2018directed,angermueller2019model,ren2022proximal}.
They use wild-type sequence as starting point during iterative search.
In this paper, we do not focus on the modification upon the wild-type sequence, but aim to efficiently generate novel and diverse sequences with improved protein functional properties.

\subsection{Monte Carlo Expectation-Maximization}
We will develop our method based on Monte Carlo expectation-maximization (MCEM)~\cite{bishop2006pattern}.
A latent generative model assumes data x (e.g. a protein sequence) is generated from a latent variable z.
To learn this latent model, the optimization procedure for maximizing the log marginal likelihood is to alternate between expectation step (E-step) and maximization step (M-step). 
EM directly targets the log marginal likelihood of an observation x by involving a variational distribution $q_{\phi} (z)$:
\begin{equation}
\begin{split}
\log p_{\theta}(x) &= E_{q_{\phi}(z)} [\log p_{\theta}(x, z) - \log q_{\phi}(z)] \\&+ D_{KL}(q_{\phi}(z)||p_{\theta}(z|x))
\end{split}
\label{equation_variational_em}
\end{equation}
where $p_{\theta}(z|x)$ is the true posterior distribution and $p_{\theta}(x, z)=p_{\theta}(x|z)p_{\theta} (z)$ is the joint distribution, composed of the conditional likelihood $p_{\theta}(x|z)$ and the prior $p_{\theta}(z)$.
In MCEM, E-step samples a set of z from $q_{\phi}(z)$ to estimate the expectation using Monte Carlo method, and then M-step fits the model parameters $\theta$ by maximizing the Monte Carlo estimation \cite{wei1990monte}.
It can be proved that this process will never decease the log marginal likelihood \cite{bishop2006pattern}.


\section{Proposed Method: \model}
\label{methods}
In this section, we describe our method in detail.
We will first present the probabilistic model and its learning algorithm.
To make the learning more efficient, we describe how to uncover and use the potential constraints conveyed in the protein sequences.

\subsection{Problem Formulation} 
Our goal is to search over a space of discrete protein sequences $\mathcal{V}^{L}$ -- $\mathcal{V}$ consists of 20 amino acids and $L$ is the sequence length -- for sequence $x \in \mathcal{V}^{L}$ that maximizes a given fitness function $f: \mathcal{V}^L \rightarrow \mathbb{R}$.  
Let the fitness value $y=f(x)$, given a predefined threshold $\lambda$, we define a conditional likelihood function:
\begin{equation}
P(\mathcal{S}|x) = \left\{
\begin{aligned}
1 &,  & {f(x) \ge \lambda,} \\
0 &, & {\text{otherwise}}
\end{aligned}
\right.
\label{equation_labmbda}
\end{equation}
where $\mathcal{S}$ represents the event that the fitness of x is ideal ($y\ge \lambda$).
We also assume a class of generative models $P_{\theta}(x)$ that can be trained to model the raw protein sequences and it will be kept fixed afterwards.
Since the search space is exponentially large (O($20^L$)), random search would be time-intensive.
Following \citet{brookes2019conditioning}, we formulate the protein design problem as generating satisfactory sequences from the posterior distribution $P_{\theta}(x|\mathcal{S})$:
\begin{equation}
P_{\theta}(x|\mathcal{S})=\frac{P_{\theta}(x)P(\mathcal{S}|x)}{P_{\theta}(\mathcal{S})}
\label{eq:posterior_fitted_protein}
\end{equation}
where $P_{\theta}(\mathcal{S})=\int_x P_{\theta}(x)P(\mathcal{S}|x) dx$ is a normalization constant which does not rely on x.
Protein sequences generated from $P_{\theta}(x|\mathcal{S})$ are not only more likely to be real proteins, but also have higher functional scores (i.e., fitness). The higher the $\lambda$ is, the higher fitness the discovered protein sequences have.

\subsection{Probabilistic Model}
Directly generating satisfactory sequences from the posterior distribution $P_{\theta}(x|\mathcal{S})$ is highly efficient compared with randomly search over the exponentially discrete space.
However, realizing this idea is difficult as computing $P_{\theta}(\mathcal{S})=\int_x P(\mathcal{S}|x)P_{\theta}(x)dx$ needs an integration over all possible x, which is intractable.
Instead, we propose to learn a variational distribution $Q_{\phi}(x)$ with learnable parameter $\phi$ to approximate satisfactory proteins $P_{\theta}(x|\mathcal{S})$.
Following \citet{brookes2019conditioning}, to find the optimal $\phi$ of the proposal distribution, we choose to minimize the KL divergence between the posterior distribution $P_{\theta}(x|\mathcal{S})$ and the variational distribution $Q_{\phi}(x)$:
\begin{equation}
\begin{split}
\phi^* &= \argmax_{\phi} -D_{KL}(P_{\theta}(x|\mathcal{S})||Q_{\phi}(x)) \\
&= \argmax_{\phi} E_{P_{\theta}(x|\mathcal{S})} \log Q_{\phi}(x) + \mathcal{H(P_{\theta})}
\end{split}
\label{equation4}
\end{equation}
where $\mathcal{H(P_{\theta})}=-E_{P_{\theta}(x|\mathcal{S})} \log P_{\theta}(x|\mathcal{S})$ is the entropy of $P_{\theta}(x|\mathcal{S})$ and can be dropped because it does not matter $\phi$. 

\begin{figure}
  \centering
  \includegraphics[width=8.0cm]{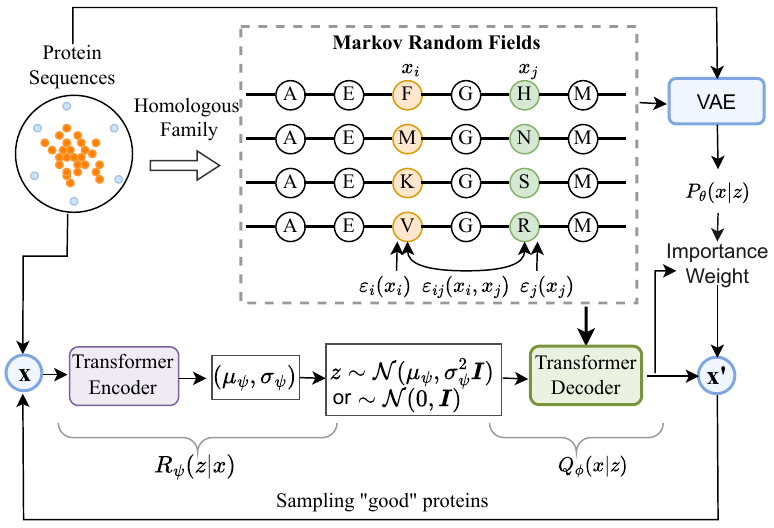}
  \caption{Overall architecture of the proposed \model. The upper half illustrates the Markov random fields which learns the combinatorial structure of amino acids in protein sequences from the same family. The bottom half shows the combinatorial structure feature augmented probabilistic model. Red lines show the calculation of importance weight.}
  \vspace{-1em}
  \label{figure_model}
\end{figure}

Diversity is a key consideration in our protein design procedure, which not only satisfies the diverse nature of species, but also can reduce undesired inter-domain misfolding \cite{wright2005importance}.
In order to promote the diversity of the designed protein sequences, we introduce a latent variable z into our model to capture the high-order dependencies among amino acids in protein sequences. 
We assume the joint approximate distribution $Q_\phi(x, z) = Q_0(z) Q_\phi(x|z)$. 
Thus our final goal is to maximize the expected log-likelihood  $\log Q_{\phi}(x)$ of the satisfactory proteins with respect to the ideal posterior distribution $P_{\theta}(x|\mathcal{S})$. 
By using the EM objective from Eq.\eqref{equation_variational_em}, we derive  
\begin{align}
\mathcal{L} = E_{P_{\theta}(x|\mathcal{S})} &\log Q_{\phi}(x) \notag \\
= E_{P_{\theta}(x|\mathcal{S})}
&\bigl\{E_{R_{\psi}(z|x)} [\log Q_{\phi}(x, z) - \log R_{\psi}(z|x)] \notag \\ 
&+ D_{KL}(R_{\psi}(z|x)||Q_{\phi}(z|x))\bigr\} \notag \\
= E_{P_{\theta}(x|\mathcal{S})}& \mathcal{F}(R_{\psi}(z|x), \phi)
\label{equation_likelihodd}
\end{align}
where $R_{\psi}(z|x)$ is another variational distribution with its parameter $\psi$ to approximate the intractable posterior $Q_\phi(z|x)$, and $\mathcal{F}(R_{\psi}(z|x), \phi)=E_{R_{\psi}(z|x)} [\log Q_{\phi}(x, z) - \log R_{\psi}(z|x)]+ D_{KL}(R_{\psi}(z|x)||Q_{\phi}(z|x))$. Here $Q_0(z)$ is implemented as a standard normal distribution, and $Q_{\phi}(x|z)$ is a Transformer decoder \cite{vaswani2017attention} with $z$ as the first embedding input, which will be augmented by the combinatorial structure features (Sec. \ref{subsection_mrfs}). 
We implement $R_{\psi}(z|x)$ as a normal distribution $\mathcal{N}(\mu_{\psi}, \sigma^2_{\psi}\bold{I})$ with $\mu_{\psi}, \sigma_{\psi}=\mathrm{Transformer}_\psi(x)$.
The overall architecture is illustrated in Figure~\ref{figure_model}.



\subsection{Importance Weighted EM}
To maximize the objective defined in Eq.\eqref{equation_likelihodd}, we plan to learn the proposal distribution $Q_{\phi}(x, z)$ through Monte Carlo EM, because the sampling procedure and iterative optimization process can lead to a better estimate (Figure \ref{figure2} (b)), resulting in novel and diverse proteins with higher fitness.

Since Eq.\eqref{equation_likelihodd} involves  the  expectation over the intractable distribution $P_{\theta}(x|\mathcal{S})$ and the KL divergence of the intractable posterior distribution $Q_\phi(z|x)$, 
we use importance sampling to approximate its variational lower bound. 
We assume the original protein sequence is also generated from a same latent variable $z$, i.e. $P_{\theta}(x,z)=P_{\theta}(x|z)P_0(z)$, where $P_0(z)$ is a standard normal distribution and $P_{\theta}(x|z)$ is a Transformer with $z$ as the initial input embedding. 
We also assume the latent variable z in $P_{\theta}(x, z)$ and $Q_{\phi}(x, z)$ are defined on the same latent space.
In practice, we learn $P_{\theta}(x|z)$ as the decoder in a pre-trained variational auto-encoder on raw protein sequences. 
We assume $\mathcal{S}$ is only conditioned on x, leading to  $P_{\theta}(x, z| \mathcal{S})=\frac{P_{\theta}(x,z)P(\mathcal{S}|x)}{P_{\theta}(\mathcal{S})}$.
Then the final training objective can be derived as:
\begin{align}
\mathcal{L} &=  E_{P_{\theta}(x|\mathcal{S})} \mathcal{F}(R_{\psi}(z'|x), \phi) = E_{P_{\theta}(x, z|\mathcal{S})} \mathcal{F}(R_{\psi}(z'|x), \phi) \notag \\
 &= E_{{Q_{\phi}(x,z)}}\frac{P_{\theta}(x,z|\mathcal{S})}{Q_{\phi}(x,z)}\mathcal{F}(R_{\psi}(z'|x), \phi) \notag \\
&\ge \frac{1}{\mathcal{C}}E_{{Q_{\phi}(x,z)}}\frac{P_{\theta}(x|z)}{Q_{\phi}(x|z)}P(\mathcal{S}|x) \Bigl( E_{R_{\psi}(z'|x)} \log Q_{\phi}(x|z') \notag \\
& \qquad\qquad- D_{KL}(R_{\psi}(z'|x)||Q_0(z'))\Bigr)  = \tilde{\mathcal{L}}
\label{equation8}
\end{align}
Here we use the importance sampling fundamental identity to derive the  equation~\cite{robert2004monte}.
$\mathcal{C} = P_{\theta}(\mathcal{S})$ is a constant which does not rely on $\phi$ and $\psi$.
The last inequality adopts the evidence lower bound (ELBO)~\cite{bishop2006pattern}.
We use importance sampling based EM to approximate the above objective \cite{robert2004monte} with joint samples $(x_n, z_n) \sim Q_{\phi}(x, z)$.
Specifically, at each iteration, we perform, \\
\textbf{E-step}:
\begin{align}
&\text{Sample}~~ x_n \sim P_{data}, ~~ z_n\sim R_{\psi}(z|x_n) \notag \\
&~~~~~~\text{ and } z_n\sim Q_0(z), ~~ x_n\sim Q_{\phi}(x|z_n), ~~ z'_n \sim R_{\psi}(z|x_n) \notag \\
&\tilde{\mathcal{L}_t} = \sum_{n=1}^{N} w'(x_n, z_n) \Bigl(\log Q_\phi (x_n | z'_n) \notag \\
& \qquad\qquad - D_{KL}(R_{\psi}(z'|x)||Q_0(z'))\Bigr)
\label{eq:e-step-init}
\end{align}
where $w(x_n, z_n)=\frac{P_{\theta}(x_n|z_n)}{Q_{\phi^{(t)}}(x_n|z_n)}P(\mathcal{S}|x_n)$ is the unnormalized importance weight, $w'(x_n, z_n)=\frac{ w(x_n, z_n)}{\sum_{n=1}^N w(x_n, z_n)}$ is the normalized one, and $N$ is the sample size.  Since $R_\psi(z|x)$ is assumed to be a normal distribution with its mean $\mu_\psi$ and variance $\sigma_\psi^2 \bold{I}$ calculated from a Transformer encoder $\mu_{\psi}, \sigma_{\psi}=\mathrm{Transformer}_\psi(x_n)$, the KL term can be calculated in closed form (Eq. \eqref{equation: special_kl}). \\
\textbf{M-step}: 
\begin{align}
\label{eq:m-step-init}
&\psi^{(t+1)}, \phi^{(t+1)} = \argmax_{\psi, \phi} \tilde{\mathcal{L}_t} \\
&\psi^{(t+1)} = \psi^{(t)} + r * \nabla_{\psi} \tilde{\mathcal{L}_t}, \phi^{(t+1)} = \phi^{(t)} + r * \nabla_{\phi} \tilde{\mathcal{L}_t} 
\notag
\end{align}
In E-step, we use two techniques to generate samples for $z$ and $x$  -- using the real protein sequences $x$ with generated $z$ from $R_\psi(z|x)$ and using $z$ from a standard normal distribution with generated proteins from $Q_\phi(x|z)$.
In M-step, we optimize the KL regularized data generation log likelihood with both high-fitness real and synthetic proteins. 
The procedure can be viewed as self-training with augmented synthetic protein data, which differs from prior approaches such as CbAS~\cite{brookes2019conditioning}.  

\subsection{Guiding Model Climbing through Combinatorial Structure}
\label{subsection_mrfs}
As shown in previous work, the combinatorial structure of amino acids in protein sequences can be learned from a generative graphical model Markov random fields (MRFs) fitted on the sequences from the same family \cite{hopf2017mutation,luo2021ecnet}.
These structure constraints are the results of the evolutionary process under natural selection and may reveal clues on which amino-acid combinations are more favorable than others. 
Thus we incorporate these features into our model to guide it towards higher fitness landscape to faster find desired protein sequences.

Given a protein sequence $x=(x_1, x_2, .., x_M)$ with $M$ amino acids, the generative model generates it with likelihood $P_{\epsilon}(x)=\frac{\exp(\Upsilon(x))}{Z}$ where $Z=\int_x \exp(\Upsilon(x)) dx$ is a normalization constant and $\Upsilon(x)$ is the corresponding energy function, which is defined as the sum of all pairwise constraints and single-site constraints as follows:
\begin{equation}
\Upsilon(x) = \sum_{i=1}^M \varepsilon_i(x_i) + \sum_{i=1}^M\sum_{j=1, j\ne i}^M \varepsilon_{ij}(x_i, x_j)
\end{equation}
where $\varepsilon_i(x_i)$ denotes the single-site constraint of $x_i$ at position i and $\varepsilon_{ij}(x_i, x_j)$ denotes the pairwise constraint of $x_i$ and $x_j$ at position i, j.
The above graphical model is illustrated in the upper half of Figure~\ref{figure_model}.

We train the model on protein sequences from the same family following CCMpred \cite{seemayer2014ccmpred} using a pseudo-likelihood $\hat{P}_{\epsilon}(x)$ (provided in Appendix \ref{pseudo_likelihood}) combined with $L_2$ regularization to make the  learning of $P_{\epsilon}(x)$ easier. 
But different from them, we additionally add $L_1$ regularization to the training objective to make the graph sparse, of which the regularization coefficients are set to the same values as the $L_2$ regularization:
\begin{equation}
\begin{split}
\max L_{\epsilon}&=\sum_x \log \hat{P}_{\epsilon}(x) - L_1(\varepsilon) - L_2(\varepsilon) \\
L_1(\varepsilon) &= \alpha_{\text{single}} \sum_{i=1}^M ||\varepsilon_i||_1^1 + \alpha_{\text{pair}}\sum_{i,j=1,i\ne j}||\varepsilon_{ij}||_1^1 \\
L_2(\varepsilon) &= \alpha_{\text{single}} \sum_{i=1}^M ||\varepsilon_i||_2^2 + \alpha_{\text{pair}}\sum_{i,j=1,i\ne j}||\varepsilon_{ij}||_2^2 \\
\end{split}
\end{equation}
where $x$ denotes protein sequences from the same family, $\varepsilon_i=[\varepsilon_i(a_1), \varepsilon_i(a_2), ..., \varepsilon_i(a_{20})]$ is the vector of the single-site constraints of the 20 amino acids at position i, and $\varepsilon_{ij}=[\varepsilon_{ij}(a_1, a_2), \varepsilon_{ij}(a_1, a_3), ..., \varepsilon_{ij}(a_M, a_{M-1})]$ is the vector of all possible pairwise constraints at position i, j.


After training the MRFs, we can encode a protein sequence $x$ with the learned constraints. Specifically, we first encode the i-th amino acid by concatenating its corresponding single-site constraint as well as the possible pairwise ones:
\begin{equation}
\begin{split}
&\boldsymbol{\varepsilon_i}(x_i) =[\varepsilon_i(x_i), \varepsilon_{i1}(x_i, a_{1\boldsymbol{\cdot}}), ..., \varepsilon_{iM}(x_i, a_{M\boldsymbol{\cdot}})]  \\
&\varepsilon_{ij}(x_i, a_{j\boldsymbol{\cdot}}) = [\varepsilon_{ij}(x_i, a_1), \varepsilon_{ij}(x_i, a_2), ..., \varepsilon_{ij}(x_i, a_{20})] 
\end{split}
\end{equation}
where $\varepsilon_{ij}(x_i, a_{j\boldsymbol{\cdot}})$ gathers the 20 amino acids for any position $j \ne i$. Then we map $\boldsymbol{\varepsilon_i}(x_i)$ to the amino-acid embedding space with trainable parameter $W_{\varepsilon}$, and add the mapped vector to the original amino-acid embedding $e(x_i)$ to get the final feature vector as our model input:
\begin{equation}
\begin{split}
&\hat{e}(x_i) = e(x_i) + W_{\varepsilon} * \boldsymbol{\varepsilon_i}(x_i) \\
&H_0 = z', \quad H_{i} = \hat{e}(x_{i-1}) \; \text{for} \; 1 \le i < M
\end{split}
\label{eq:enhance-h}
\end{equation}
where $H_i$ is the input embedding of the Transformer decoder, i.e., the first  input is set to the sampled latent vector $z'$ and the input for other position $i$ is set to the combinatorial structure augmented feature vector  $\hat{e}(x_{i-1})$.

Combining Eq.~\eqref{eq:e-step-init} and \eqref{eq:enhance-h}, the learning process  becomes:\\
\textbf{E-step}:
\begin{align}
\tilde{\mathcal{L}_t} &= \sum_{n=1}^{N} w'(x_n, z_n) \Bigl(\log Q_\phi (x_n | z'_n; \boldsymbol{\varepsilon})  \\
& - \bigl( \frac{1}{2}\| \sigma_\psi(x_n) \|_2^2 + \frac{1}{2}\| \mu_\psi(x_n) \|^2_2 - \sum_{i=1}^d\log \sigma_{\psi,i}(x_n)   \bigr)\Bigr)  \notag\\
&w'(x_n, z_n)=\frac{ w(x_n, z_n)}{\sum_{n=1}^N w(x_n, z_n)}\notag \\
&w(x_n, z_n)=\frac{P_{\theta}(x_n|z_n;\boldsymbol{\varepsilon})}{Q_{\phi^{(t)}}(x_n|z_n;\boldsymbol{\varepsilon})}P(\mathcal{S}|x_n) \notag
\label{expectation}
\end{align}
where $Q_\phi (x_n | z'_n; \boldsymbol{\varepsilon})$  is  computed  from  
$\mathrm{Transformer}(H_0, H_{1:M})$. In practice, we also learn $P_{\theta}(x,z;\boldsymbol{\varepsilon})=P_0(z)P_{\theta}(x|z;\boldsymbol{\varepsilon})$ as a combinatorial structure feature enhanced latent generative model, where $P_{\theta}(x|z;\boldsymbol{\varepsilon})$ is a combinatorial structure feature enhanced Transformer with a learnable mapping matrix $W_{\eta}$ to transform the combinatorial structure features into the embedding space.
$d$ is the dimensionality of $\sigma_\psi(x_n)$ and $\sigma_{\psi, i}(x_n)$ is its $i$-th dimension.\\
\textbf{M-step}: \\
\begin{equation}
\begin{split}
\psi^{(t+1)} = \psi^{(t)} + r * \nabla_{\psi} \tilde{\mathcal{L}_t}, \phi^{(t+1)} = \phi^{(t)} + r * \nabla_{\phi} \tilde{\mathcal{L}_t} 
\end{split}
\label{maximization}
\end{equation}
The overall learning algorithm is given in Appendix~\ref{training_algorithm}.


\section{Experiments}
\label{experiments}
In this section, we conduct extensive experiments to validate the effectiveness of our proposed \model on protein sequence design task.

\subsection{Implementation Details} 
We first train a VAE model on raw protein sequences using a 6-layer Transformer as the encoder and a 2-layer Transformer as the decoder. 
We  add the combinatorial structure features to the Transformer decoder input as described in Sec.~\ref{subsection_mrfs} and its weight $W_{\eta}$ is learned jointly.   
The raw protein probability $P_{\theta}(x|z;\boldsymbol{\varepsilon})$ is then defined by the combinatorial structure feature augmented decoder in this VAE. 
The protein combinatorial structure constraints $\boldsymbol{\varepsilon}$ are learned on the training sequences for each dataset, rather than using real multiple sequence alignments (MSAs), to ensure a fair comparison.
The approximate posterior probability $\tilde{P}_\theta(z|x)$ is defined by its encoder. 
Its embedding and feed-forward network sizes are 320 and 1280. 
The encoder is initialized with the pre-trained ESM-2 weights~\cite{lin2022language}\footnote{\scriptsize {https://dl.fbaipublicfiles.com/fairesm/models/esm2\_t6\_8M\_UR50D.pt}}. 
The latent variable $z$'s dimension is also 320. 
We use another 6-layer Transformer encoder followed by a linear mapping to learn $\mu_{\phi}$ and $\sigma_{\phi}$ for $R_\psi(z|x)=\mathcal{N}(\mu_{\psi}, \sigma^2_{\psi}\bold{I})$ and another 2-layer Transformer decoder to learn $Q_\phi(x|z;\boldsymbol{\varepsilon})$. We initialize parameters $\psi^{(0)}$ and $\phi^{(0)}$ with $\theta$.


The number of iterations in the importance sampling-based MCEM is set to 10. The mini-batch size and learning rate are set to $4,096$ tokens and $1$e-$5$ respectively.
The model is trained with $1$ NVIDIA RTX A$6000$ GPU card.
We apply Adam algorithm~\cite{kingma2014adam} as the optimizer with a linear warm-up over the first $4,000$ steps and linear decay for later steps.
We randomly split each dataset into training/validation sets with the ratio of $9$:$1$. We run all the experiments for five times and report the average scores.
More experimental settings are given in Appendix \ref{addtional_experimental_setting}. Following \cite{kamisetty2013assessing}, we set $\alpha_{\text{single}}=1$ and $\alpha_{\text{pair}}=0.2*(M-1)$ with $M$ equals to the protein sequence length.

In inference, we design protein sequences by taking the wild-type as encoder input and 
the latent vector is sampled from prior distribution $Q_0(z)=N(0, \bold{I})$.
The sequences are decoded using sampling strategy with top-$5$. The candidate number is set to K=$128$ following the setting of \citet{jain2022biological} on the GFP dataset.

\subsection{Datasets}
Following \citet{ren2022proximal}, we evaluate our method on the following eight protein engineering benchmarks:\\
(1) \textbf{Green Fluorescent Protein (avGFP)}: The goal is to design sequences with higher log-fluorescence intensity values. We collect data following \citet{sarkisyan2016local}.
(2) \textbf{Adeno-Associated Viruses (AAV)}: The target is to generate amino-acid segment (position $561-588$) for higher gene therapeutic efficiency. We collect data following \citet{bryant2021deep}.
(3) \textbf{TEM-1 $\beta$-Lactamase (TEM)}: The goal is to design high thermodynamic-stable sequences. We merge the data from \citet{firnberg2014comprehensive}. 
(4) \textbf{Ubiquitination Factor Ube4b (E4B)}: The objective is to design sequences with higher enzyme activity. We gather data following \citet{starita2013activity}.
(5) \textbf{Aliphatic Amide Hydrolase (AMIE)}: The goal is to produce amidase sequences with higher enzyme activity. We merge data following \citet{wrenbeck2017single}.
(6) \textbf{Levoglucosan Kinase (LGK)}: The target is to optimize LGK protein sequences with improved enzyme activity. We collect data following \citet{klesmith2015comprehensive}. 
(7) \textbf{Poly(A)-binding Protein (Pab1)}: The goal is to design sequences with higher binding fitness to multiple adenosine monophosphates. We gather data following \citet{melamed2013deep}.
(8) \textbf{SUMO E2 Conjugase (UBE2I)}: We aim to find human SUMO E2 conjugase with higher  growth rescue rate. Data are obtained following \citet{weile2017framework}.
The detailed data statistics, including protein sequence length, data size and data source are provided in Appendix \ref{append_data}.

\begin{table*}[ht]
\small
\centering
\begin{tabular}{lcccccccc>{\columncolor{mygray}}c}
\midrule
Models  & avGFP & AAV & TEM & E4B & AMIE & LGK & Pab1 & UBE2I & \cellcolor{white}Average \\
\midrule
CMA-ES & $4.492$ & $-3.417$ & $0.375$ & $-0.768$ & $-8.224$ & $-0.077$ & $0.164$ & $2.461$ & $-0.624$\\
FBGAN & $1.251$  & $-4.227$  &  $0.006$ & $0.369$  &  $-2.410$ &  $-1.206$ & $0.029$ &    $0.208$    & $-0.747$   \\
DbAS & $3.548$ & $4.327$ & $0.003$ &$-1.286$ &$-2.658$ &$-1.148$ &$1.524$ & $3.088$ & $0.924$\\
CbAS & $3.550$ & $4.336$ &$0.106$ &$-1.000$ &$-1.306$ &$-0.362$ &$1.842$ &$3.263$ & $1.303$ \\
PEX & $3.764$ & $3.265$ &$0.121$ &$5.019$ &$-0.474$ &$0.007$ &$1.153$ &$1.995$ & $1.856$ \\
GFlowNet-AL & $5.062$ & $1.205$ &$1.552$ &$3.155$ &$0.059$ & $0.027$ &$2.168$ &$3.576$ & $2.101$\\
ESM-Search & $2.610$ & $-5.099$ & $0.148$ & $-1.860$ & $-2.351$ & $-0.029$ & $1.406$ & $3.244$ & $-0.241$ \\
\hdashline
\model & $\textbf{6.185}$ & $\textbf{4.813}$ & $\textbf{1.850}$ & $\textbf{5.737}$ & $\textbf{0.062}$ & $\textbf{0.035}$  & $\textbf{2.923}$ & $\textbf{4.536}$ & $\textbf{3.267}$\\
~-- w/o ESM & $1.214$ &  $-4.313$ &$0.005$ & $-1.352$  &$-6.376$ &$-0.225$ &$0.072$ &$1.843$ & $-1.141$ \\
~-- w/o ISEM & $4.708$ &$1.130$ &$0.708$ & $0.046$ & $-2.335$ &$-0.077$ &$1.913$ &$0.475$ & $1.342$ \\
~-- w/o MRFs & $4.376$ & $1.008$ &$0.952$ & $0.045$ &  $-1.771$ & $-0.012$  &$1.652$ &$2.418$ & $1.083$\\
~-- w/o LV & $4.274$ & $2.251$ &$0.078$ &$-1.612$ & $-2.266$ &$-0.931$ &$0.041$ &$-0.262$ & $0.196$\\
\bottomrule
\end{tabular}
\caption{Maximum fitness scores (MFS) of all methods on eight datasets. Higher values indicate better functional properties in the dataset. Our proposed \model achieves the highest fitness scores on all datasets.}
\label{table_fitness}
\end{table*}

\begin{table*}[ht]
\small
\centering
\begin{tabular}{lcccccccc>{\columncolor{mygray}}c}
\midrule
Models  & avGFP & AAV & TEM & E4B & AMIE & LGK & Pab1 & UBE2I & \cellcolor{white}Average \\
\midrule
CMA-ES & $\textbf{225.12}$ & $23.50$ & $261.60$ & $86.81$ & $283.90$ & $317.08$ & $61.16$ & $140.92$ & $175.01$ \\
FBGAN & $0.64$ &  $8.31$ &  $0.46$ & $3.87$ & $33.87$ & $17.35$ & $3.07$ &  $3.00$ & $8.82$ \\
DbAS &$3.04$ & $3.00$ &$3.67$ &$5.94$ &$1.32$ &$2.30$ & $4.05$ &$11.80$ & $4.33$\\
CbAS & $1.31$ &$3.01$ &$7.03$ &$7.09$ &$6.01$ &$6.15$ &$9.86$ &$22.73$ & $8.23$\\
PEX & $6.83$ &$4.35$ &$10.26$ &$5.22$ &$7.56$ &$13.24$ &$5.33$ &$10.32$ & $7.88$\\
GFlowNet-AL & $224.78$ &\textbf{25.57} &\textbf{266.43} &$43.62$ &$219.84$ &$212.25$ &$37.13$ & $49.79$ & $134.92$\\
ESM-Search & $3.79$ & $3.58$ & $11.56$ & $3.82$ & $3.83$ & $3.78$ & $5.71$ & $6.59$& $5.33$ \\
\hdashline
\model & $218.62$ & $22.92$ & $202.09$ & $\textbf{91.35}$ & $\textbf{293.30}$ & $\textbf{405.99}$ & $\textbf{68.27}$ & $122.66$ & $\textbf{178.15}$\\
~-- w/o ESM & $204.21$ & $13.87$ &$194.78$ &$7.90$ &$276.88$ &$362.98$ &$3.30$ &$119.87$ & $147.96$\\
~-- w/o ISEM & $122.15$ & $22.91$ &$70.12$ & $86.17$ &$145.74$ &$169.17$ &$60.22$ & $13.05$ & $153.09$ \\
~-- w/o MRFs & $217.35$ &  $17.02$ &$225.88$ &$84.64$ &$268.07$ & $381.66$ & $66.26$ &$\textbf{143.29}$ & $175.52$\\
~-- w/o LV &  $22.70$ &$5.26$ &$10.12$ & $5.24$ & $8.65$ &$20.28$ & $0.67$ & $2.98$ & $9.48$\\
\bottomrule
\end{tabular}
\caption{Diversity scores of all models on eight datasets. Higher values indicate more diverse protein sequences. Our \model achieves the highest average diversity scores over the eight protein datasets.}
\label{table_diversity}
\end{table*}

\begin{table*}[!t]
\small
\centering
\begin{tabular}{lcccccccc>{\columncolor{mygray}}c}
\midrule
Models  & avGFP & AAV & TEM & E4B & AMIE & LGK & Pab1 & UBE2I & \cellcolor{white}Average \\
\midrule
CMA-ES & $221.55$ & $22.73$ & $269.25$ & $93.78$ & $256.07$ & $415.63$ & $59.35$ & $128.10$ & $183.30$ \\
FBGAN & $0.05$ & $2.76$ & $0.08$ & $0.63$ & $57.87$ & $39.36$ &  $0.75$ & $0.80$ & $12.43$  \\
DbAS & $1.01$ & $3.01$ &$1.47$ &$1.09$ &$1.12$ &$1.63$ &$1.64$ &$2.05$ & $1.66$\\
CbAS & $4.02$ &$3.03$ &$2.06$ &$1.90$ &$1.33$ &$1.09$ &$2.71$ &$2.95$ & $1.92$\\
PEX & $3.59$ &$1.88$ &$8.57$ &$4.08$ &$4.50$ & $10.53$ & $3.63$ &$10.24$ & $5.87$\\
GFlowNet-AL & $221.95$ &$22.83$ &$266.99$ &$86.78$ &$316.79$ & $412.58$ & $61.33$ &$143.28$ & $191.56$\\
ESM-Search & $1.50$ & $1.83$ & $7.32$ & $1.21$ & $0.92$ & $0.91$ & $5.46$ & $6.14$ & $3.16$\\
\hdashline
\model & $\textbf{226.31}$ & $\textbf{23.81}$  & $\textbf{270.27}$  &$\textbf{96.57}$  &$\textbf{332.23}$ & $\textbf{420.93}$  & $\textbf{70.09}$  & $\textbf{153.27}$ & $\textbf{199.18}$\\
~-- w/o ESM & $198.08$ &$9.25$ &$176.20$ & $3.79$ & $264.36$ &$340.62$ & $1.49$ &$110.53$ & $138.04$\\
~-- w/o ISEM & $195.85$ &$16.36$ &$244.05$ & $85.81$ &$306.22$ &$382.12$ &$53.01$ & $99.51$ & $180.40$\\
~-- w/o MRFs & $207.59$ & $9.53$ &$221.49$ & $90.81$ &$244.14$ &$327.14$ &$58.44$ & $129.75$ & $161.11$\\
~-- w/o LV & $16.67$ & $0.89$ &$4.39$ & $0.48$ &$3.98$ &$9.64$ &$0.21$ &$1.76$ & $4.75$\\
\bottomrule
\end{tabular}
\caption{Novelty scores of all models on eight datasets. Higher values indicate more novel protein sequences. Our \model achieves the highest novelty scores on all the datasets.}
\label{table_novelty}
\end{table*}

\begin{figure*}
\begin{minipage}[t]{0.25\linewidth}
\centering
\includegraphics[width=4.5cm]{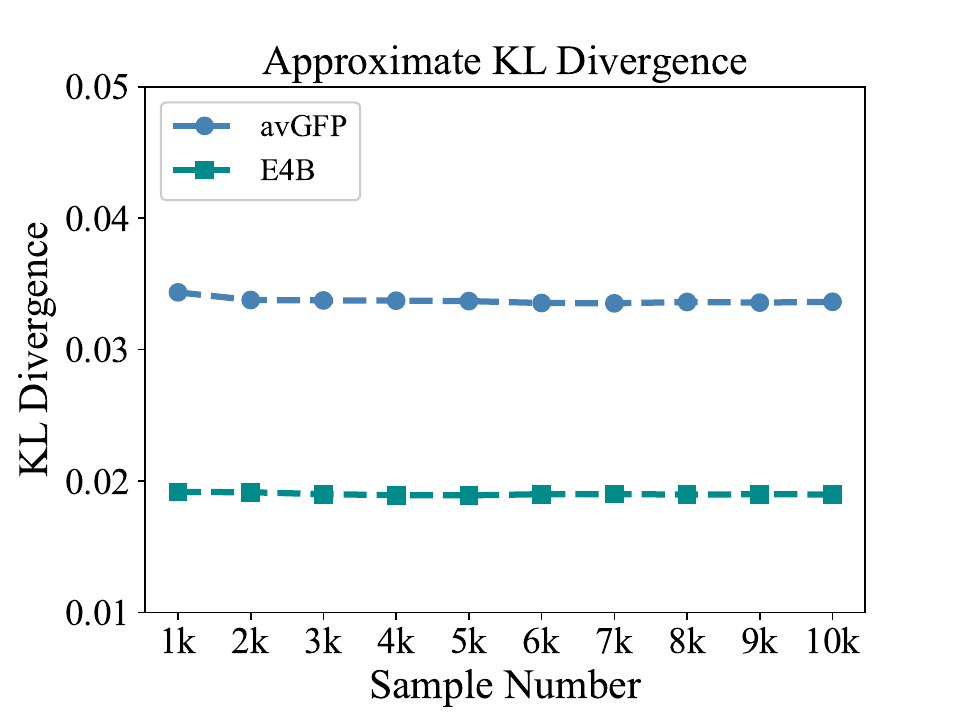}
\centerline{\small{(a) avGFP and E4B}}
\end{minipage}%
\begin{minipage}[t]{0.25\linewidth}
\centering
\includegraphics[width=4.5cm]{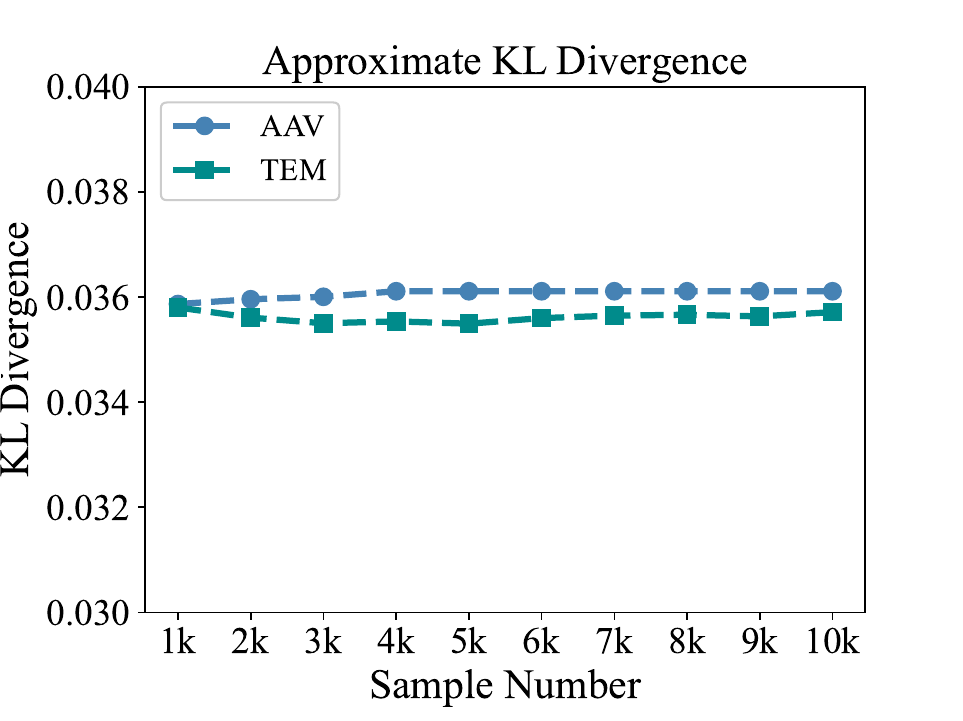}
\centerline{\small{(b) AAV and TEM}}
\end{minipage}%
\begin{minipage}[t]{0.25\linewidth}
\centering
\includegraphics[width=4.5cm]{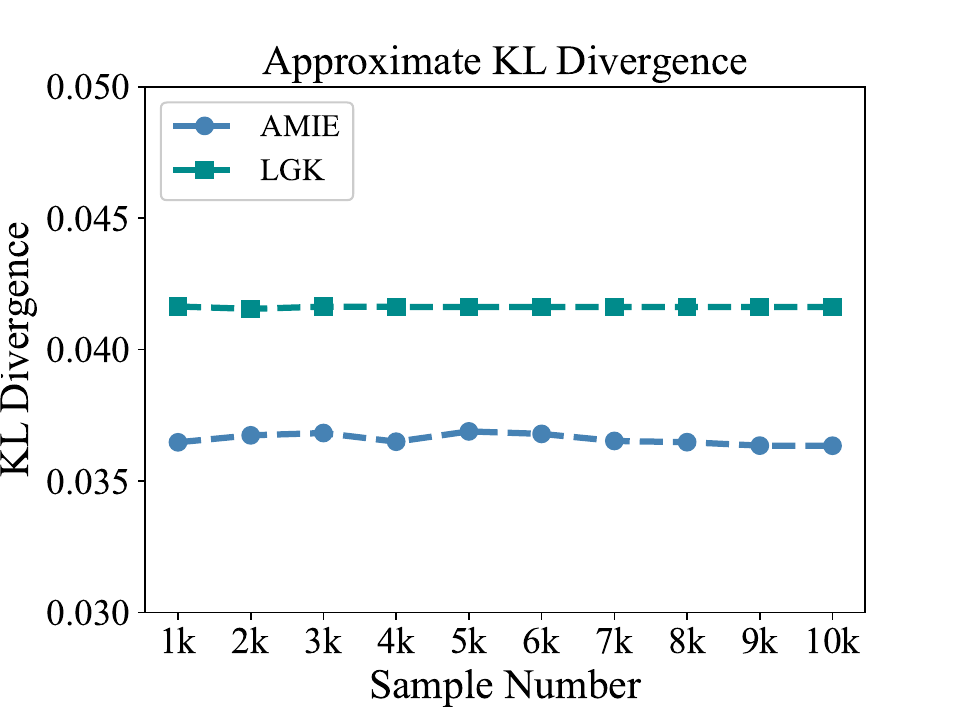}
\centerline{\small{(c) AMIE and LGK}}
\end{minipage}%
\begin{minipage}[t]{0.25\linewidth}
\centering
\includegraphics[width=4.5cm]{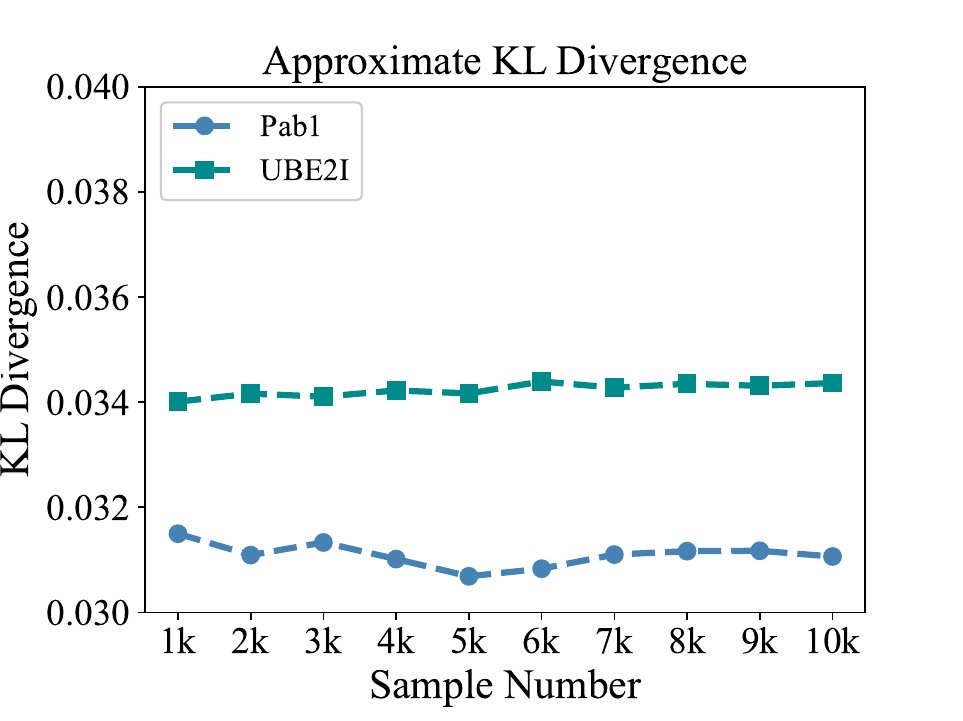}
\centerline{\small{(d) Pab1 and UBE2I}}
\end{minipage}
\vspace{-1em}
\caption{Approximate KL divergence on eight protein datasets. It shows the variance of KL divergence is very small over different sample size for all datasets, giving empirical evidence that when we sample from the ultimate $Q_{\phi}(x)$, it has minor difference compared with sampling from the posterior distribution $P_{\theta}(x|\mathcal{S})$.} 
 \label{figure_other_kl}
\end{figure*} 

\subsection{Baseline Models}
We compare our method against the following representative baselines:
(1) \textbf{CMA-ES} \cite{hansen2001completely} is a famous evolutionary search algorithm.
(2) \textbf{FBGAN} proposed by \citet{gupta2019feedback} is a novel feedback-loop architecture with generative model GAN.
(3) \textbf{DbAS} \cite{brookes2018design} is a probabilistic modeling framework and uses adaptive sampling algorithm.
(4) \textbf{CbAS} \cite{brookes2019conditioning} improves on DbAS by conditioning on the desired properties.
(5) \textbf{PEX} proposed by \citet{ren2022proximal} is a model-guided sequence design algorithm using proximal exploration.
(6) \textbf{GFlowNet-AL} \cite{jain2022biological} applies GFlowNet to design biological sequences. We use the implementations of CMA-ES, DbAS and CbAS provided in \citet{trabucco2022design} and for other baselines, we apply their released codes. 
To better analyze the influence of different components in our model, we also conduct ablation tests as follows:
(1) \textbf{\model-w/o-ESM} removes ESM-2 as encoder initialization.
(2) \textbf{\model-w/o-ISEM} removes iterative optimization process.
(3) \textbf{\model-w/o-MRFs} removes MRFs features and iterative optimization process.
(4) \textbf{\model-w/o-LV} removes latent variable, MRFs features and iterative optimization process.
(5) \textbf{ESM-Search} samples sequences from the softmax distribution obtained by finetuning ESM-2 on the protein datasets and taking the wild-type as input.

\subsection{Evaluation Metrics}
We use three automatic metrics to evaluate the performance of the designed sequences:
(1) \textbf{MFS}: Maximum fitness score. The oracle model adopted to evaluate MFS is described in Appendix \ref{addtional_experimental_setting}; (2) \textbf{Diversity} proposed by \citet{jain2022biological} is used to evaluate how different the designed candidates are from each other; (3) \textbf{Novelty} proposed by \citet{jain2022biological} is used to evaluate how different the proposed candidates are from the sequences in training data.

\subsection{Main Results}
Table \ref{table_fitness}, \ref{table_diversity} and \ref{table_novelty} respectively report the maximum fitness scores, diversity scores and novelty scores of all models.

\textbf{\model achieves the highest fitness scores on all protein families and outperforms the previous best method GFlowNet-AL by 55\% on average} (Table \ref{table_fitness}).
The reasons are two-folds.
On one hand, the importance sampling based MCEM can help our model to navigate to a better region instead of getting trapped in a worse local optima.
On the other hand, the combinatorial structure features help to recognize the preferred mutation patterns which have higher success rate under the nature selection pressure, potentially leading to sequences with higher fitness scores.

\textbf{\model achieves the highest average diversity score over the eight tasks} (Table \ref{table_diversity}).
Our model gains the highest diversity on $4$ out of $8$ tasks while GFlowNet-AL gains $2$ and CMA-ES gains $1$. It indicates that though involving combinatorial structure constraints can give the guidance for preferred protein patterns, it might also limit the sequence design to these patterns to some extent.
The involved latent variable can capture complex inter-dependencies among amino acids, which benefits for more diverse protein design. 

\textbf{\model can design more novel protein sequences on all datasets} (Table \ref{table_novelty}).
Our model achieves higher novelty scores on all datasets due to the reason that more new samples are involved during the importance sampling based iterative optimization process, which is beneficial for more novel protein design.

\subsection{Ablation Study}
Bottom halves of Table \ref{table_fitness}, \ref{table_diversity} and \ref{table_novelty} report the results of ablation tests.
\model-w/o-MRFs improves the average diversity and novelty scores by as much as $33$x compared with \model-w/o-LV, which demonstrates that introducing a latent variable can significantly help to generate diverse proteins.
\model-w/o-MRFs achieves higher maximum fitness scores than \model-w/o-ESM on all datasets, validating that adopting a pretrained protein language model as the encoder helps to design more satisfactory protein sequences.
However, directly finetuning ESM-2 to sample candidates (ESM-Search) drops $1.3$ points on average fitness score compared with taking ESM-2 as an encoder (\model-w/o-MRFs), demonstrating that ESM-2 is not suitable for direct sequence design.
Incorporating combinatorial structure features can further improve the fitness of the designed proteins (\model-w/o-ISEM V.S. \model-w/o-MRFs), based on which learning the proposal distribution by importance sampling based MCEM can better promote more desirable, diverse and novel protein generation.
We also provide the model performance with a fully-connected decoder~(non-autoregressive decoder) instead of the current autoregressive one in Appendix~\ref{appendix_nat}.
It shows the non-autoregressive decoder can not generate good candidates for protein with longer sequences such as LGK ($439$ amino acids).

\section{Analysis}
\label{analysis}


\begin{figure*}
\begin{minipage}[t]{0.5\linewidth}
\centering
\includegraphics[width=7.0cm]{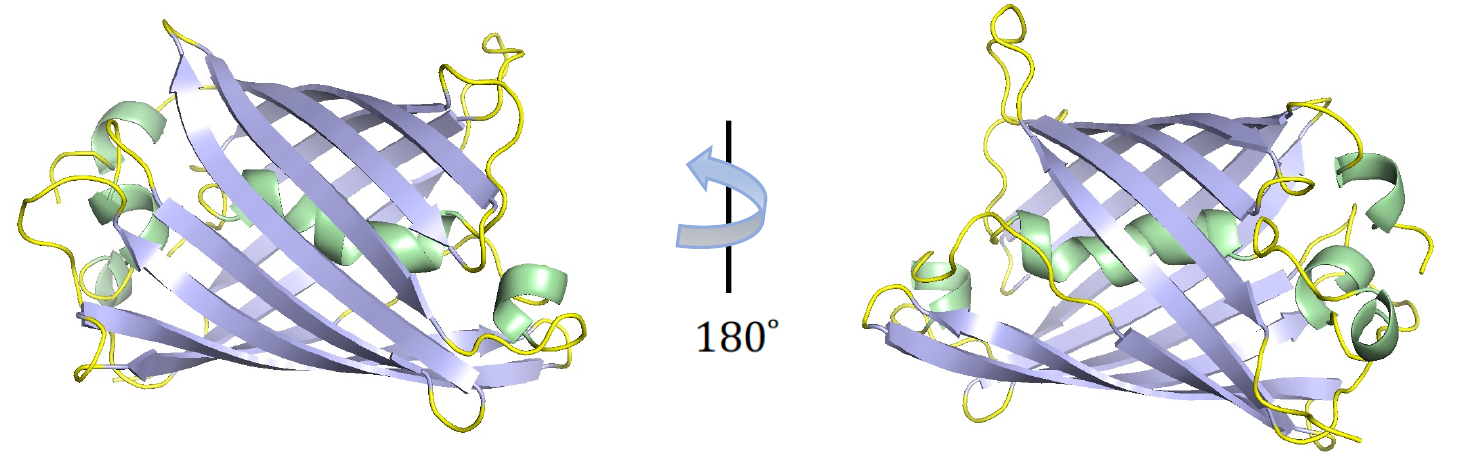}
\centerline{(a) $3$-D visualization of sequence with highest fitness.}
\end{minipage}%
\begin{minipage}[t]{0.5\linewidth}
\centering
\includegraphics[width=7.0cm]{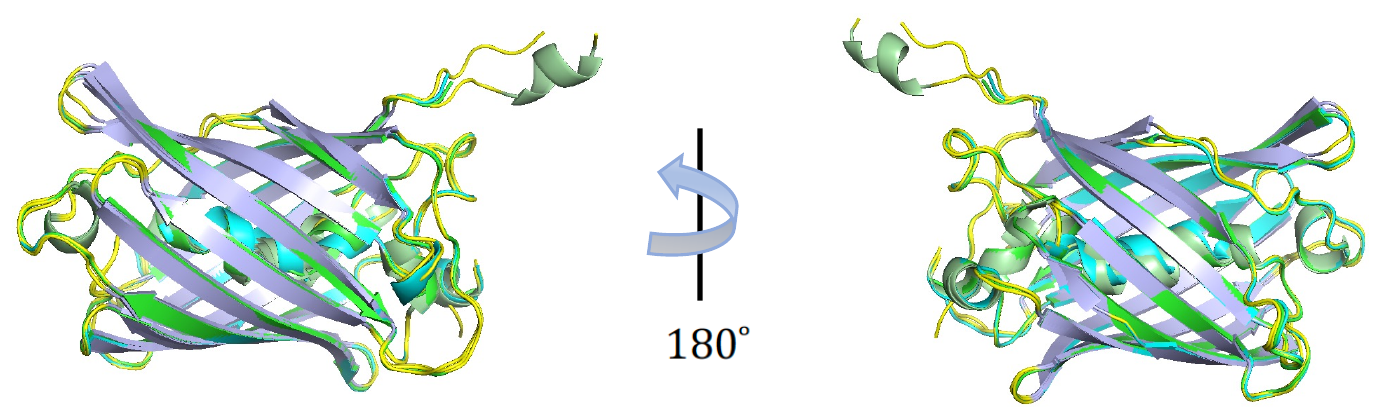}
\centerline{(b) Superposition of $5$ fluorescent proteins in top-5 templates.}
\end{minipage}
\vspace{-1.2em}
	\caption{3-D visualization of our designed  green fluorescent protein, validating that \model can generate realistic fluorescent protein.} 
 \label{figure_3d}
\end{figure*}




\begin{table}[!t]
\small
\centering
\begin{tabular}{lccc}
\midrule
Methods  & MFS & Diversity & Novelty  \\
\midrule
Latent-Add & $4.102$ & $218.26$ & $209.85$\\
Latent-Memory & $4.040$ & $\textbf{219.00}$ & $211.38$ \\
\model & $\textbf{6.185}$ & $218.62$ & $\textbf{226.31}$\\
\bottomrule
\end{tabular}
\caption{Results of different schemes of introducing a latent variable with a pretrained encoder evaluated on avGFP dataset. It shows that our method, which takes the latent representation as the first token input of decoder, achieves a higher fitness and novelty scores though with a mild decrease on diversity.}
\label{table_vae}
\end{table}

\subsection{Approximate KL Divergence}
To validate how close the proposal distribution $Q_{\phi}(x)$ is to the posterior distribution $P_{\theta}(x|\mathcal{S})$, we calculate the KL divergence through Monte Carlo approximation.
Here we calculate the KL divergence between $Q_{\phi}(x)$ and $P_{\theta}(x|\mathcal{S})$ as we have proven in Lemma $C.1$ (provided in Appendix \ref{theoretical_understanding}) that the sampling difference between these two distributions can be bounded under this divergence. 
We leverage an unbiased and low-variance estimator (proof shown in Appendix \ref{kl_proof}) to approximate the KL divergence as follows:
\begin{equation}
\small 
D_{KL}(Q_{\phi}(x)||P_{\theta}(x|\mathcal{S}))=E_{Q_{\phi} (x)}[r(x)-1-\log r(x)] 
\end{equation}
where $r(x)=\frac{P_{\theta}(x|\mathcal{S})}{Q_{\phi}(x)}$.  
The approximate KL divergence on eight protein datasets over $1$k-$10$k samples are illustrated in Figure \ref{figure_other_kl}.
From the figure, we can see that the variance of KL divergence is very small over different sample size for all datasets.
Besides, the KL divergence finally arrives at a small value, such as $0.018$ for E4B and $0.033$ for avGFP. It gives empirical evidence that when we sample from the ultimate $Q_{\phi}(x)$, it has minor difference compared with sampling from the posterior distribution $P_{\theta}(x|\mathcal{S})$.

\subsection{Effect of VAE Implementation Method}
Next, we study the effect of different implementation schemes of involving a latent variable with a pretrained encoder.
Some works have tried adding the latent representation to the original embedding layer (Latent-Add) or using it as an additional memory (Latent-Memory) when adopting a pretrained language model as encoder \cite{li2020optimus}.
We also implement our model with these two schemes, and evaluate the model performance on avGFP dataset. 
Table \ref{table_vae} shows that our method, which takes the latent representation as the first token input of decoder, achieves a higher fitness and novelty scores though with a mild decrease on diversity.
We also provide the conditional VAE~(CVAE) performance in Appendix~\ref{appendix_cvae}, which shows decreased performance. It demonstrates that iterative sampling inside the EM algorithm with explicit fitness constraints can lead to better proteins.

\subsection{Case Study}
To gain an insight on how well the designed proteins are, we analyze the generated avGFP sequence with highest fitness in detail using Phyre2 tool~\cite{kelley2015phyre2}. 
Figure~\ref{figure_3d} (a) illustrates the generated variant can fold stably.
According to the software, the most similar protein is Cytochrome b562 integral fusion with enhanced green fluorescent protein (EGFP) \cite{edwards2008linking}.
There are 227 residues (96\% of the candidate sequence) have been modeled with 100.0\% confidence by using this protein as template.
Details are given in Appendix \ref{case_study}.
Figure~\ref{figure_3d}(b) visualizes the superposition of the top-5 most similar templates to our sequence in the protein data bank, which are all fluorescent proteins and show highly consistent structure in most regions, validating that our model can design a real fluorescent protein.

\section{Related Work}
\label{related_work}
\textbf{Machine Learning for Protein Fitness Landscape Prediction.} Machine learning has been increasingly used for modeling protein fitness landscape, which is crucial for protein engineering. Some work leverage co-evolution information from multiple sequence alignments to predict fitness scores \cite{kamisetty2013assessing,luo2021ecnet}.
\citet{melamed2013deep} propose to construct a deep latent generative model to capture higher-order mutations.
\citet{meier2021language} propose to use pretrained protein language models to enable zero-shot prediction. 
The learned protein landscape models can be used to replace the expensive wet-lab validation to screen enormous designed sequences \cite{rao2019evaluating,ren2022proximal}.

\textbf{Methods for Protein Sequence Design.}
Protein sequence design has been studied with a wide variety of methods, including traditional directed evolution \cite{arnold1998design,dalby2011strategy,packer2015methods,arnold2018directed} and machine learning methods.
The mainly used machine learning algorithms include reinforcement learning \cite{angermueller2019model,jain2022biological}, Bayesian optimization \cite{belanger2019biological,moss2020boss,terayama2021black}, search using adaptive evolution methods \cite{hansen2006cma,swersky2020amortized}, likelihood-free inference \cite{zhang2021unifying}, deep generative models \cite{brookes2018design,madani2020progen,kumar2020model,das2021accelerated,hoffman2022optimizing,melnyk2021benchmarking, ren2022proximal} and latent deep generative model~\cite{brookes2019conditioning}. 
Our approach has the same objective using Bayes rule and KL divergence (Eq.~\eqref{eq:posterior_fitted_protein}~\eqref{equation4}) as CbAS~\cite{brookes2019conditioning}. 
Different from CbAS, our approach includes two latent models -- one representing the raw protein sequences and the other representing "good" proteins. The solution is derived under the Monte Carlo EM framework. Secondly, our approach is essentially self-training since we use both real proteins and (importance) sampled high-quality ones in each iteration to train the generation model. Thirdly, we augment the generative model with combinatorial structure features learned from MRF. 
This comprehensive framework not only enables the generative model to explore optimal regions in either the Fujiyama landscape or the Badlands landscape \cite{kauffman1989nk}, but also significantly enhances protein diversity and novelty.


\section{Discussion}
\label{discussion}

We opt for MCEM over the standard VAE to learn the latent generative model due to our belief that the standard VAE has certain limitations.
The major limitation of the standard VAE is that it maximizes the likelihood of observed data, say protein sequence, but higher likelihood does not necessarily associate with higher fitness of a protein. Our IsEM-Pro tackles this by explicitly modelling fitness constraints and amino acid correlation in a protein sequence. As shown in Table~\ref{table_fitness}, \ref{table_diversity} and \ref{table_novelty}, the fitness scores of standard VAE (\model-w/o-MRFs) decrease a lot compared with our \model and the diversity and novelty scores also slightly decrease. Besides, as \citet{dieng2019reweighted} analyzed, VAE amortizes the cost of inference by using a recognition network to parameterize the variational family, which introduces an amortization gap and leads to approximate posteriors of reduced expressivity due to the problem known as posterior collapse. Instead, EM directly maximizes the log marginal likelihood of the data and each iteration in EM is guaranteed to increase the log marginal likelihood from the previous iteration~\cite{bishop2006pattern}. Therefore, using EM in the context of deep generative models could lead to better performance.
Additionally, standard VAE attempts to perform variational inference through a direct, discriminative mapping from data observations to approximate posterior parameters. Though generative models can adapt to accommodate sub-optimal approximate posteriors, it's likely limited to direct inference mapping, leading to being trapped in a worse local optima~\cite{marino2018iterative}. Instead, EM guarantees that the log marginal likelihood of the data keeps increasing in each iteration, helping our model climb much closer to the global optima.

One limitation of this work is, although our model has demonstrated promising results, the designed protein sequences have not undergone wet-lab testing and there might be some uncertainty correlated with the oracle model.
The results reported in our paper are averaged over five runs, which can accommodate some variance on this metric. Furthermore, all fitness data used in our paper are obtained from wet-lab experiments, ensuring that the fitness values of the training datasets are realistic. To reduce the cost of wet-lab validation and select good protein candidates, we evaluate the designed sequences using the oracle model trained on real protein sequences following~\cite{ren2022proximal,jain2022biological}, which may associate with some uncertainties. While there is currently no perfect automatic metric available, we believe that new methods should be encouraged in the field of computational biology. With time, we are confident that the field will continue to develop and mature.

\section{Conclusion}
\label{conclusion}
This paper proposes \model, a latent generative model for protein sequence design, which incorporates additional combinatorial structure features learned by MRFs.
We use importance weighted EM to learn the model, which can not only enhance design diversity and novelty, but also lead to protein sequences with higher fitness.
Experimental results on eight protein sequence design tasks show that our method outperforms several strong baselines on all metrics.

\section*{Acknowledgements}
The authors would like to thank the anonymous reviewers for their valuable comments. We are also grateful to Siqi Ouyang, Yujian Liu, Jingjing Xu, Danqing Wang, Antonis Antoniades, and 
Jennifer Listgarten for their great suggestions. This work is partially supported by UCSB Faculty Research Award.


\bibliography{icml2023}

\begin{thebibliography}{67}
\providecommand{\natexlab}[1]{#1}
\providecommand{\url}[1]{\texttt{#1}}
\expandafter\ifx\csname urlstyle\endcsname\relax
  \providecommand{\doi}[1]{doi: #1}\else
  \providecommand{\doi}{doi: \begingroup \urlstyle{rm}\Url}\fi

\bibitem[Angerm{\"{u}}ller et~al.(2020)Angerm{\"{u}}ller, Dohan, Belanger,
  Deshpande, Murphy, and Colwell]{angermueller2019model}
Angerm{\"{u}}ller, C., Dohan, D., Belanger, D., Deshpande, R., Murphy, K., and
  Colwell, L.
\newblock Model-based reinforcement learning for biological sequence design.
\newblock In \emph{Proc. of ICLR}. OpenReview.net, 2020.

\bibitem[Arnold(1998)]{arnold1998design}
Arnold, F.~H.
\newblock Design by directed evolution.
\newblock \emph{Accounts of chemical research}, 31\penalty0 (3):\penalty0
  125--131, 1998.

\bibitem[Arnold(2018)]{arnold2018directed}
Arnold, F.~H.
\newblock Directed evolution: bringing new chemistry to life.
\newblock \emph{Angewandte Chemie International Edition}, 57\penalty0
  (16):\penalty0 4143--4148, 2018.

\bibitem[Belanger et~al.(2019)Belanger, Vora, Mariet, Deshpande, Dohan,
  Angermueller, Murphy, Chapelle, and Colwell]{belanger2019biological}
Belanger, D., Vora, S., Mariet, Z., Deshpande, R., Dohan, D., Angermueller, C.,
  Murphy, K., Chapelle, O., and Colwell, L.
\newblock Biological sequences design using batched bayesian optimization.
\newblock 2019.

\bibitem[Bishop(2006)]{bishop2006pattern}
Bishop, C.~M.
\newblock \emph{Pattern recognition and machine learning}.
\newblock Springer, 2006.

\bibitem[Biswas et~al.(2021)Biswas, Khimulya, Alley, Esvelt, and
  Church]{biswas2021low}
Biswas, S., Khimulya, G., Alley, E.~C., Esvelt, K.~M., and Church, G.~M.
\newblock Low-n protein engineering with data-efficient deep learning.
\newblock \emph{Nature methods}, 18\penalty0 (4):\penalty0 389--396, 2021.

\bibitem[Bloom \& Arnold(2009)Bloom and Arnold]{bloom2009light}
Bloom, J.~D. and Arnold, F.~H.
\newblock In the light of directed evolution: pathways of adaptive protein
  evolution.
\newblock \emph{Proceedings of the National Academy of Sciences}, 106\penalty0
  (supplement\_1):\penalty0 9995--10000, 2009.

\bibitem[Brookes \& Listgarten(2018)Brookes and Listgarten]{brookes2018design}
Brookes, D.~H. and Listgarten, J.
\newblock Design by adaptive sampling.
\newblock \emph{ArXiv preprint}, abs/1810.03714, 2018.

\bibitem[Brookes et~al.(2019)Brookes, Park, and
  Listgarten]{brookes2019conditioning}
Brookes, D.~H., Park, H., and Listgarten, J.
\newblock Conditioning by adaptive sampling for robust design.
\newblock In Chaudhuri, K. and Salakhutdinov, R. (eds.), \emph{Proc. of ICML},
  volume~97 of \emph{Proceedings of Machine Learning Research}, pp.\  773--782.
  {PMLR}, 2019.

\bibitem[Bryant et~al.(2021)Bryant, Bashir, Sinai, Jain, Ogden, Riley, Church,
  Colwell, and Kelsic]{bryant2021deep}
Bryant, D.~H., Bashir, A., Sinai, S., Jain, N.~K., Ogden, P.~J., Riley, P.~F.,
  Church, G.~M., Colwell, L.~J., and Kelsic, E.~D.
\newblock Deep diversification of an aav capsid protein by machine learning.
\newblock \emph{Nature Biotechnology}, 39\penalty0 (6):\penalty0 691--696,
  2021.

\bibitem[Chothia(1984)]{chothia1984principles}
Chothia, C.
\newblock Principles that determine the structure of proteins.
\newblock \emph{Annual review of biochemistry}, 53\penalty0 (1):\penalty0
  537--572, 1984.

\bibitem[Csisz{\'a}r \& K{\"o}rner(2011)Csisz{\'a}r and
  K{\"o}rner]{csiszar2011information}
Csisz{\'a}r, I. and K{\"o}rner, J.
\newblock \emph{Information theory: coding theorems for discrete memoryless
  systems}.
\newblock Cambridge University Press, 2011.

\bibitem[Dalby(2011)]{dalby2011strategy}
Dalby, P.~A.
\newblock Strategy and success for the directed evolution of enzymes.
\newblock \emph{Current opinion in structural biology}, 21\penalty0
  (4):\penalty0 473--480, 2011.

\bibitem[Das et~al.(2021)Das, Sercu, Wadhawan, Padhi, Gehrmann, Cipcigan,
  Chenthamarakshan, Strobelt, Dos~Santos, Chen, et~al.]{das2021accelerated}
Das, P., Sercu, T., Wadhawan, K., Padhi, I., Gehrmann, S., Cipcigan, F.,
  Chenthamarakshan, V., Strobelt, H., Dos~Santos, C., Chen, P.-Y., et~al.
\newblock Accelerated antimicrobial discovery via deep generative models and
  molecular dynamics simulations.
\newblock \emph{Nature Biomedical Engineering}, 5\penalty0 (6):\penalty0
  613--623, 2021.

\bibitem[Dieng \& Paisley(2019)Dieng and Paisley]{dieng2019reweighted}
Dieng, A.~B. and Paisley, J.
\newblock Reweighted expectation maximization.
\newblock \emph{ArXiv preprint}, abs/1906.05850, 2019.

\bibitem[Edwards et~al.(2008)Edwards, Busse, Allemann, and
  Jones]{edwards2008linking}
Edwards, W.~R., Busse, K., Allemann, R.~K., and Jones, D.~D.
\newblock Linking the functions of unrelated proteins using a novel directed
  evolution domain insertion method.
\newblock \emph{Nucleic acids research}, 36\penalty0 (13):\penalty0 e78--e78,
  2008.

\bibitem[Firnberg et~al.(2014)Firnberg, Labonte, Gray, and
  Ostermeier]{firnberg2014comprehensive}
Firnberg, E., Labonte, J.~W., Gray, J.~J., and Ostermeier, M.
\newblock A comprehensive, high-resolution map of a gene’s fitness landscape.
\newblock \emph{Molecular biology and evolution}, 31\penalty0 (6):\penalty0
  1581--1592, 2014.

\bibitem[Fox et~al.(2007)Fox, Davis, Mundorff, Newman, Gavrilovic, Ma, Chung,
  Ching, Tam, Muley, et~al.]{fox2007improving}
Fox, R.~J., Davis, S.~C., Mundorff, E.~C., Newman, L.~M., Gavrilovic, V., Ma,
  S.~K., Chung, L.~M., Ching, C., Tam, S., Muley, S., et~al.
\newblock Improving catalytic function by prosar-driven enzyme evolution.
\newblock \emph{Nature biotechnology}, 25\penalty0 (3):\penalty0 338--344,
  2007.

\bibitem[Go(1983)]{go1983theoretical}
Go, N.
\newblock Theoretical studies of protein folding.
\newblock \emph{Annual review of biophysics and bioengineering}, 12\penalty0
  (1):\penalty0 183--210, 1983.

\bibitem[Gupta \& Zou(2019)Gupta and Zou]{gupta2019feedback}
Gupta, A. and Zou, J.
\newblock Feedback gan for dna optimizes protein functions.
\newblock \emph{Nature Machine Intelligence}, 1\penalty0 (2):\penalty0
  105--111, 2019.

\bibitem[Hansen(2006)]{hansen2006cma}
Hansen, N.
\newblock The cma evolution strategy: a comparing review.
\newblock \emph{Towards a new evolutionary computation}, pp.\  75--102, 2006.

\bibitem[Hansen \& Ostermeier(2001)Hansen and Ostermeier]{hansen2001completely}
Hansen, N. and Ostermeier, A.
\newblock Completely derandomized self-adaptation in evolution strategies.
\newblock \emph{Evolutionary computation}, 9\penalty0 (2):\penalty0 159--195,
  2001.

\bibitem[Higgins et~al.(2017)Higgins, Matthey, Pal, Burgess, Glorot, Botvinick,
  Mohamed, and Lerchner]{higgins2022beta}
Higgins, I., Matthey, L., Pal, A., Burgess, C., Glorot, X., Botvinick, M.,
  Mohamed, S., and Lerchner, A.
\newblock beta-vae: Learning basic visual concepts with a constrained
  variational framework.
\newblock In \emph{Proc. of ICLR}. OpenReview.net, 2017.

\bibitem[Hoffman et~al.(2022)Hoffman, Chenthamarakshan, Wadhawan, Chen, and
  Das]{hoffman2022optimizing}
Hoffman, S.~C., Chenthamarakshan, V., Wadhawan, K., Chen, P.-Y., and Das, P.
\newblock Optimizing molecules using efficient queries from property
  evaluations.
\newblock \emph{Nature Machine Intelligence}, 4\penalty0 (1):\penalty0 21--31,
  2022.

\bibitem[Hopf et~al.(2017)Hopf, Ingraham, Poelwijk, Sch{\"a}rfe, Springer,
  Sander, and Marks]{hopf2017mutation}
Hopf, T.~A., Ingraham, J.~B., Poelwijk, F.~J., Sch{\"a}rfe, C.~P., Springer,
  M., Sander, C., and Marks, D.~S.
\newblock Mutation effects predicted from sequence co-variation.
\newblock \emph{Nature biotechnology}, 35\penalty0 (2):\penalty0 128--135,
  2017.

\bibitem[Jain et~al.(2022)Jain, Bengio, Hern{\'{a}}ndez{-}Garc{\'{\i}}a,
  Rector{-}Brooks, Dossou, Ekbote, Fu, Zhang, Kilgour, Zhang, Simine, Das, and
  Bengio]{jain2022biological}
Jain, M., Bengio, E., Hern{\'{a}}ndez{-}Garc{\'{\i}}a, A., Rector{-}Brooks, J.,
  Dossou, B. F.~P., Ekbote, C.~A., Fu, J., Zhang, T., Kilgour, M., Zhang, D.,
  Simine, L., Das, P., and Bengio, Y.
\newblock Biological sequence design with gflownets.
\newblock In Chaudhuri, K., Jegelka, S., Song, L., Szepesv{\'{a}}ri, C., Niu,
  G., and Sabato, S. (eds.), \emph{Proc. of ICML}, volume 162 of
  \emph{Proceedings of Machine Learning Research}, pp.\  9786--9801. {PMLR},
  2022.

\bibitem[Kamisetty et~al.(2013)Kamisetty, Ovchinnikov, and
  Baker]{kamisetty2013assessing}
Kamisetty, H., Ovchinnikov, S., and Baker, D.
\newblock Assessing the utility of coevolution-based residue--residue contact
  predictions in a sequence-and structure-rich era.
\newblock \emph{Proceedings of the National Academy of Sciences}, 110\penalty0
  (39):\penalty0 15674--15679, 2013.

\bibitem[Kauffman \& Weinberger(1989)Kauffman and Weinberger]{kauffman1989nk}
Kauffman, S.~A. and Weinberger, E.~D.
\newblock The nk model of rugged fitness landscapes and its application to
  maturation of the immune response.
\newblock \emph{Journal of theoretical biology}, 141\penalty0 (2):\penalty0
  211--245, 1989.

\bibitem[Kelley et~al.(2015)Kelley, Mezulis, Yates, Wass, and
  Sternberg]{kelley2015phyre2}
Kelley, L.~A., Mezulis, S., Yates, C.~M., Wass, M.~N., and Sternberg, M.~J.
\newblock The phyre2 web portal for protein modeling, prediction and analysis.
\newblock \emph{Nature protocols}, 10\penalty0 (6):\penalty0 845--858, 2015.

\bibitem[Kingma \& Ba(2015)Kingma and Ba]{kingma2014adam}
Kingma, D.~P. and Ba, J.
\newblock Adam: {A} method for stochastic optimization.
\newblock In Bengio, Y. and LeCun, Y. (eds.), \emph{Proc. of ICLR}, 2015.

\bibitem[Kingma \& Welling(2014)Kingma and Welling]{kingma2013auto}
Kingma, D.~P. and Welling, M.
\newblock Auto-encoding variational bayes.
\newblock In Bengio, Y. and LeCun, Y. (eds.), \emph{Proc. of ICLR}, 2014.

\bibitem[Klesmith et~al.(2015)Klesmith, Bacik, Michalczyk, and
  Whitehead]{klesmith2015comprehensive}
Klesmith, J.~R., Bacik, J.-P., Michalczyk, R., and Whitehead, T.~A.
\newblock Comprehensive sequence-flux mapping of a levoglucosan utilization
  pathway in e. coli.
\newblock \emph{ACS synthetic biology}, 4\penalty0 (11):\penalty0 1235--1243,
  2015.

\bibitem[Kumar \& Levine(2020)Kumar and Levine]{kumar2020model}
Kumar, A. and Levine, S.
\newblock Model inversion networks for model-based optimization.
\newblock In Larochelle, H., Ranzato, M., Hadsell, R., Balcan, M., and Lin, H.
  (eds.), \emph{Proc. of NeurIPS}, 2020.

\bibitem[Labrou(2010)]{labrou2010random}
Labrou, N.~E.
\newblock Random mutagenesis methods for in vitro directed enzyme evolution.
\newblock \emph{Current Protein and Peptide Science}, 11\penalty0 (1):\penalty0
  91--100, 2010.

\bibitem[Lagass{\'e} et~al.(2017)Lagass{\'e}, Alexaki, Simhadri, Katagiri,
  Jankowski, Sauna, and Kimchi-Sarfaty]{lagasse2017recent}
Lagass{\'e}, H.~D., Alexaki, A., Simhadri, V.~L., Katagiri, N.~H., Jankowski,
  W., Sauna, Z.~E., and Kimchi-Sarfaty, C.
\newblock Recent advances in (therapeutic protein) drug development.
\newblock \emph{F1000Research}, 6, 2017.

\bibitem[Li et~al.(2020)Li, Gao, Li, Peng, Li, Zhang, and Gao]{li2020optimus}
Li, C., Gao, X., Li, Y., Peng, B., Li, X., Zhang, Y., and Gao, J.
\newblock Optimus: Organizing sentences via pre-trained modeling of a latent
  space.
\newblock In Webber, B., Cohn, T., He, Y., and Liu, Y. (eds.), \emph{Proc. of
  EMNLP}, pp.\  4678--4699. Association for Computational Linguistics, 2020.

\bibitem[Lin et~al.(2023)Lin, Akin, Rao, Hie, Zhu, Lu, Smetanin, Verkuil,
  Kabeli, Shmueli, et~al.]{lin2022language}
Lin, Z., Akin, H., Rao, R., Hie, B., Zhu, Z., Lu, W., Smetanin, N., Verkuil,
  R., Kabeli, O., Shmueli, Y., et~al.
\newblock Evolutionary-scale prediction of atomic-level protein structure with
  a language model.
\newblock \emph{Science}, 379\penalty0 (6637):\penalty0 1123--1130, 2023.

\bibitem[Luo et~al.(2021)Luo, Jiang, Yu, Liu, Vo, Ding, Su, Qian, Zhao, and
  Peng]{luo2021ecnet}
Luo, Y., Jiang, G., Yu, T., Liu, Y., Vo, L., Ding, H., Su, Y., Qian, W.~W.,
  Zhao, H., and Peng, J.
\newblock Ecnet is an evolutionary context-integrated deep learning framework
  for protein engineering.
\newblock \emph{Nature communications}, 12\penalty0 (1):\penalty0 5743, 2021.

\bibitem[Ma et~al.(2003)Ma, Drake, and Christou]{ma2003production}
Ma, J.~K., Drake, P.~M., and Christou, P.
\newblock The production of recombinant pharmaceutical proteins in plants.
\newblock \emph{Nature reviews genetics}, 4\penalty0 (10):\penalty0 794--805,
  2003.

\bibitem[Madani et~al.(2020)Madani, McCann, Naik, Keskar, Anand, Eguchi, Huang,
  and Socher]{madani2020progen}
Madani, A., McCann, B., Naik, N., Keskar, N.~S., Anand, N., Eguchi, R.~R.,
  Huang, P.-S., and Socher, R.
\newblock Progen: Language modeling for protein generation.
\newblock \emph{ArXiv preprint}, abs/2004.03497, 2020.

\bibitem[Marino et~al.(2018)Marino, Yue, and Mandt]{marino2018iterative}
Marino, J., Yue, Y., and Mandt, S.
\newblock Iterative amortized inference.
\newblock In Dy, J.~G. and Krause, A. (eds.), \emph{Proc. of ICML}, volume~80
  of \emph{Proceedings of Machine Learning Research}, pp.\  3400--3409. {PMLR},
  2018.

\bibitem[Meier et~al.(2021)Meier, Rao, Verkuil, Liu, Sercu, and
  Rives]{meier2021language}
Meier, J., Rao, R., Verkuil, R., Liu, J., Sercu, T., and Rives, A.
\newblock Language models enable zero-shot prediction of the effects of
  mutations on protein function.
\newblock In Ranzato, M., Beygelzimer, A., Dauphin, Y.~N., Liang, P., and
  Vaughan, J.~W. (eds.), \emph{Proc. of NeurIPS}, pp.\  29287--29303, 2021.

\bibitem[Melamed et~al.(2013)Melamed, Young, Gamble, Miller, and
  Fields]{melamed2013deep}
Melamed, D., Young, D.~L., Gamble, C.~E., Miller, C.~R., and Fields, S.
\newblock Deep mutational scanning of an rrm domain of the saccharomyces
  cerevisiae poly (a)-binding protein.
\newblock \emph{Rna}, 19\penalty0 (11):\penalty0 1537--1551, 2013.

\bibitem[Melnyk et~al.(2021)Melnyk, Das, Vijil, and
  Lozano]{melnyk2021benchmarking}
Melnyk, I., Das, P., Vijil, V., and Lozano, A.
\newblock Benchmarking deep generative models for diverse antibody sequence
  design.
\newblock \emph{Proc. of NeurIPS}, 2021.

\bibitem[Moss et~al.(2020)Moss, Leslie, Beck, Gonzalez, and
  Rayson]{moss2020boss}
Moss, H.~B., Leslie, D.~S., Beck, D., Gonzalez, J., and Rayson, P.
\newblock {BOSS:} bayesian optimization over string spaces.
\newblock In Larochelle, H., Ranzato, M., Hadsell, R., Balcan, M., and Lin, H.
  (eds.), \emph{Proc. of NeurIPS}, 2020.

\bibitem[Packer \& Liu(2015)Packer and Liu]{packer2015methods}
Packer, M.~S. and Liu, D.~R.
\newblock Methods for the directed evolution of proteins.
\newblock \emph{Nature Reviews Genetics}, 16\penalty0 (7):\penalty0 379--394,
  2015.

\bibitem[Qian et~al.(2021)Qian, Zhou, Bao, Wang, Qiu, Zhang, Yu, and
  Li]{qian2021glancing}
Qian, L., Zhou, H., Bao, Y., Wang, M., Qiu, L., Zhang, W., Yu, Y., and Li, L.
\newblock Glancing transformer for non-autoregressive neural machine
  translation.
\newblock In Zong, C., Xia, F., Li, W., and Navigli, R. (eds.), \emph{Proc. of
  ACL}, pp.\  1993--2003. Association for Computational Linguistics, 2021.

\bibitem[Rao et~al.(2019)Rao, Bhattacharya, Thomas, Duan, Chen, Canny, Abbeel,
  and Song]{rao2019evaluating}
Rao, R., Bhattacharya, N., Thomas, N., Duan, Y., Chen, P., Canny, J.~F.,
  Abbeel, P., and Song, Y.~S.
\newblock Evaluating protein transfer learning with {TAPE}.
\newblock In Wallach, H.~M., Larochelle, H., Beygelzimer, A.,
  d'Alch{\'{e}}{-}Buc, F., Fox, E.~B., and Garnett, R. (eds.), \emph{Proc. of
  NeurIPS}, pp.\  9686--9698, 2019.

\bibitem[Ren et~al.(2022)Ren, Li, Ding, Zhou, Ma, and Peng]{ren2022proximal}
Ren, Z., Li, J., Ding, F., Zhou, Y., Ma, J., and Peng, J.
\newblock Proximal exploration for model-guided protein sequence design.
\newblock In Chaudhuri, K., Jegelka, S., Song, L., Szepesv{\'{a}}ri, C., Niu,
  G., and Sabato, S. (eds.), \emph{Proc. of ICML}, volume 162 of
  \emph{Proceedings of Machine Learning Research}, pp.\  18520--18536. {PMLR},
  2022.

\bibitem[Riesselman et~al.(2018)Riesselman, Ingraham, and
  Marks]{riesselman2018deep}
Riesselman, A.~J., Ingraham, J.~B., and Marks, D.~S.
\newblock Deep generative models of genetic variation capture the effects of
  mutations.
\newblock \emph{Nature methods}, 15\penalty0 (10):\penalty0 816--822, 2018.

\bibitem[Rives et~al.(2021)Rives, Meier, Sercu, Goyal, Lin, Liu, Guo, Ott,
  Zitnick, Ma, et~al.]{rives2021biological}
Rives, A., Meier, J., Sercu, T., Goyal, S., Lin, Z., Liu, J., Guo, D., Ott, M.,
  Zitnick, C.~L., Ma, J., et~al.
\newblock Biological structure and function emerge from scaling unsupervised
  learning to 250 million protein sequences.
\newblock \emph{Proceedings of the National Academy of Sciences}, 118\penalty0
  (15):\penalty0 e2016239118, 2021.

\bibitem[Robert \& Casella(2004)Robert and Casella]{robert2004monte}
Robert, C. and Casella, G.
\newblock \emph{Monte Carlo statistical methods}.
\newblock Springer, 2004.

\bibitem[Romero \& Arnold(2009)Romero and Arnold]{romero2009exploring}
Romero, P.~A. and Arnold, F.~H.
\newblock Exploring protein fitness landscapes by directed evolution.
\newblock \emph{Nature reviews Molecular cell biology}, 10\penalty0
  (12):\penalty0 866--876, 2009.

\bibitem[Sarkisyan et~al.(2016)Sarkisyan, Bolotin, Meer, Usmanova, Mishin,
  Sharonov, Ivankov, Bozhanova, Baranov, Soylemez, et~al.]{sarkisyan2016local}
Sarkisyan, K.~S., Bolotin, D.~A., Meer, M.~V., Usmanova, D.~R., Mishin, A.~S.,
  Sharonov, G.~V., Ivankov, D.~N., Bozhanova, N.~G., Baranov, M.~S., Soylemez,
  O., et~al.
\newblock Local fitness landscape of the green fluorescent protein.
\newblock \emph{Nature}, 533\penalty0 (7603):\penalty0 397--401, 2016.

\bibitem[Seemayer et~al.(2014)Seemayer, Gruber, and
  S{\"o}ding]{seemayer2014ccmpred}
Seemayer, S., Gruber, M., and S{\"o}ding, J.
\newblock Ccmpred—fast and precise prediction of protein residue--residue
  contacts from correlated mutations.
\newblock \emph{Bioinformatics}, 30\penalty0 (21):\penalty0 3128--3130, 2014.

\bibitem[Singh et~al.(2016)Singh, Pandey, Srivastava, Tran, and
  Pandey]{singh2016plant}
Singh, A., Pandey, A., Srivastava, A.~K., Tran, L.-S.~P., and Pandey, G.~K.
\newblock Plant protein phosphatases 2c: from genomic diversity to functional
  multiplicity and importance in stress management.
\newblock \emph{Critical Reviews in Biotechnology}, 36\penalty0 (6):\penalty0
  1023--1035, 2016.

\bibitem[Starita et~al.(2013)Starita, Pruneda, Lo, Fowler, Kim, Hiatt,
  Shendure, Brzovic, Fields, and Klevit]{starita2013activity}
Starita, L.~M., Pruneda, J.~N., Lo, R.~S., Fowler, D.~M., Kim, H.~J., Hiatt,
  J.~B., Shendure, J., Brzovic, P.~S., Fields, S., and Klevit, R.~E.
\newblock Activity-enhancing mutations in an e3 ubiquitin ligase identified by
  high-throughput mutagenesis.
\newblock \emph{Proceedings of the National Academy of Sciences}, 110\penalty0
  (14):\penalty0 E1263--E1272, 2013.

\bibitem[Starr \& Thornton(2017)Starr and Thornton]{starr2017exploring}
Starr, T.~N. and Thornton, J.~W.
\newblock Exploring protein sequence--function landscapes.
\newblock \emph{Nature biotechnology}, 35\penalty0 (2):\penalty0 125--126,
  2017.

\bibitem[Swersky et~al.(2020)Swersky, Rubanova, Dohan, and
  Murphy]{swersky2020amortized}
Swersky, K., Rubanova, Y., Dohan, D., and Murphy, K.
\newblock Amortized bayesian optimization over discrete spaces.
\newblock In Adams, R.~P. and Gogate, V. (eds.), \emph{Proceedings of the
  Thirty-Sixth Conference on Uncertainty in Artificial Intelligence, {UAI}
  2020, virtual online, August 3-6, 2020}, volume 124 of \emph{Proceedings of
  Machine Learning Research}, pp.\  769--778. {AUAI} Press, 2020.

\bibitem[Terayama et~al.(2021)Terayama, Sumita, Tamura, and
  Tsuda]{terayama2021black}
Terayama, K., Sumita, M., Tamura, R., and Tsuda, K.
\newblock Black-box optimization for automated discovery.
\newblock \emph{Accounts of Chemical Research}, 54\penalty0 (6):\penalty0
  1334--1346, 2021.

\bibitem[Trabucco et~al.(2022)Trabucco, Geng, Kumar, and
  Levine]{trabucco2022design}
Trabucco, B., Geng, X., Kumar, A., and Levine, S.
\newblock Design-bench: Benchmarks for data-driven offline model-based
  optimization.
\newblock In Chaudhuri, K., Jegelka, S., Song, L., Szepesv{\'{a}}ri, C., Niu,
  G., and Sabato, S. (eds.), \emph{Proc. of ICML}, volume 162 of
  \emph{Proceedings of Machine Learning Research}, pp.\  21658--21676. {PMLR},
  2022.

\bibitem[Vaswani et~al.(2017)Vaswani, Shazeer, Parmar, Uszkoreit, Jones, Gomez,
  Kaiser, and Polosukhin]{vaswani2017attention}
Vaswani, A., Shazeer, N., Parmar, N., Uszkoreit, J., Jones, L., Gomez, A.~N.,
  Kaiser, L., and Polosukhin, I.
\newblock Attention is all you need.
\newblock In Guyon, I., von Luxburg, U., Bengio, S., Wallach, H.~M., Fergus,
  R., Vishwanathan, S. V.~N., and Garnett, R. (eds.), \emph{Proc. of NeurIPS},
  pp.\  5998--6008, 2017.

\bibitem[Wei \& Tanner(1990)Wei and Tanner]{wei1990monte}
Wei, G.~C. and Tanner, M.~A.
\newblock A monte carlo implementation of the em algorithm and the poor man's
  data augmentation algorithms.
\newblock \emph{Journal of the American statistical Association}, 85\penalty0
  (411):\penalty0 699--704, 1990.

\bibitem[Weile et~al.(2017)Weile, Sun, Cote, Knapp, Verby, Mellor, Wu, Pons,
  Wong, van Lieshout, et~al.]{weile2017framework}
Weile, J., Sun, S., Cote, A.~G., Knapp, J., Verby, M., Mellor, J.~C., Wu, Y.,
  Pons, C., Wong, C., van Lieshout, N., et~al.
\newblock A framework for exhaustively mapping functional missense variants.
\newblock \emph{Molecular systems biology}, 13\penalty0 (12):\penalty0 957,
  2017.

\bibitem[Wrenbeck et~al.(2017)Wrenbeck, Azouz, and
  Whitehead]{wrenbeck2017single}
Wrenbeck, E.~E., Azouz, L.~R., and Whitehead, T.~A.
\newblock Single-mutation fitness landscapes for an enzyme on multiple
  substrates reveal specificity is globally encoded.
\newblock \emph{Nature communications}, 8\penalty0 (1):\penalty0 1--10, 2017.

\bibitem[Wright et~al.(2005)Wright, Teichmann, Clarke, and
  Dobson]{wright2005importance}
Wright, C.~F., Teichmann, S.~A., Clarke, J., and Dobson, C.~M.
\newblock The importance of sequence diversity in the aggregation and evolution
  of proteins.
\newblock \emph{Nature}, 438\penalty0 (7069):\penalty0 878--881, 2005.

\bibitem[Zhang et~al.(2022)Zhang, Fu, Bengio, and Courville]{zhang2021unifying}
Zhang, D., Fu, J., Bengio, Y., and Courville, A.~C.
\newblock Unifying likelihood-free inference with black-box optimization and
  beyond.
\newblock In \emph{Proc. of ICLR}. OpenReview.net, 2022.

\end{thebibliography}
\bibliographystyle{icml2023}

\newpage
\appendix
\onecolumn
\section{Data Statistics}
\label{append_data}
We provide the detailed data statistics in the following table, including protein sequence length, data size and data source.
We have checked and cleaned the data and make sure the data do not contain personally identifiable information or offensive content.

\begin{table*}[ht]
\footnotesize
\begin{center}
\begin{tabular}{lccc}
\midrule
Protein & Length & Size & Data Source \\
\midrule
avGFP & $237$ & $49,855$ & https://figshare.com/articles/dataset/Local\_fitness\_landscape\_of\_the\_green\_fluorescent\_protein \\
AAV & $28$ & $296,914$ & https://github.com/churchlab/Deep\_diversification\_AAV\\
TEM & $286$ & $17,238$ & https://github.com/facebookresearch/esm/tree/main/examples/data\\
E4B & $102$ & $91,033$ & https://figshare.com/articles/dataset\\
AMIE & $341$ & $6,631$ & https://figshare.com/articles/dataset/Normalized\_fitness\_values\_for\_AmiE\_selections/3505901/2\\
LGK & $439$ & $8,069$ & https://figshare.com/articles/dataset \\
Pab1 & $75$ & $36,522$ & https://figshare.com/articles/dataset \\
UBE2I & $159$ & $5,355$ & http://dalai.mshri.on.ca/~jweile/projects/dmsData/\\
\bottomrule
\end{tabular}
\end{center}
\caption{Detailed statistics of the eight protein datasets.}
\label{table_data_statistics}
\end{table*}

\section{More Implementation Details}
\label{experimental_setting}
\subsection{Additional Experimental Settings}
\label{addtional_experimental_setting}
We apply the annealing schedule for the KL term during $P_{\theta}(x,z)$ training process following $\beta$-VAE \cite{higgins2022beta} to prevent posterior collapse. Specifically, the KL term coefficient starts from $0$ and is gradually increased to $1.0$ as training goes on. 
At each iteration in importance sampling based EM learning process, the number of samples from current $Q_{\phi}(x, z)$ is set to $10$\% of the training data size.

Following \cite{ren2022proximal}, We construct the oracle model $f(x)$ by adopting the features produced by ESM-1b~\cite{rives2021biological} with dimension $1280$ and finetuning an Attention1D decoder to predict the fitness values.
Since \citet{brookes2019conditioning} state that the results are insensitive when $\lambda$ is set in the range [$50$, $100$]-th percentile of the fitness scores in the training set, we set $\lambda$ to $50$-th percentile of the fitness values in the training data to accommodate more diversity. 

\subsection{KL Divergence for Two Gaussian Distribution}
\label{kl_divergence}
The kl divergence for two Gaussian distribution is defined as follows:
\begin{equation}
\begin{split}
p(x)&= \frac{1}{2\pi^{0.5n}|\Sigma|^{0.5}} \exp{(-\frac{1}{2}(x-\mu)^T\Sigma^{-1}(x-u))} \\
q(x)&= \frac{1}{2\pi^{0.5n}|L|^{0.5}} \exp{(-\frac{1}{2}(x-m)^TL^{-1}(x-m))} \\
D_{KL}(p||q)&=\frac{1}{2}\{\log\frac{|L|}{|\Sigma|} + Tr(L^{-1}\Sigma)+(\mu-m)L^{-1}(u-m)^T-n\} 
\end{split}
\label{eq:normal-KL}
\end{equation}
When $p(x)=\mathcal{N}(\mu, \sigma^2 \boldsymbol{I})$ and $q(x)=\mathcal{N}(0, \boldsymbol{I})$, the KL divergence can be computed in closed form as follows:
\begin{equation}
D_{KL}(p||q)=\frac{1}{2}\| \sigma \|_2^2 + \frac{1}{2}\| \mu \|^2_2 - \sum_{i=1}^d \log \sigma_i-\frac{d}{2}
\label{equation: special_kl}
\end{equation}
where $d$ is the dimensionality of $\sigma$.

\subsection{Full Derivation Details about EM Algorithm}
\label{em_detail}
We provide the derivation details about EM algorithm in Eq (7) and Eq (8) as follows:\\
E-step:
\begin{align}
&\text{Sample}~~ x_n \sim P_{data}, ~~ z_n\sim R_{\psi}(z|x_n) 
\text{ and } z_n\sim Q_0(z), ~~ x_n\sim Q_{\phi}(x|z_n), z'_n \sim R_{\psi}(z|x_n)  \\
\mathcal{L}_t &= \frac{\sum_{n=1}^N w(x_n, z_n) \mathcal{F}(R_{\psi}(z|x), \phi)}{\sum_{n=1}^N w(x_n, z_n)} \notag \\
&=\sum_{n=1}^N w'(x_n, z_n) \Big\{ E_{R_{\psi}(z|x)} [\log Q_{\phi}(x, z) - \log R_{\psi}(z|x)] + D_{KL}(R_{\psi}(z|x)||Q_{\phi}(z|x)) \Big\} \notag \\
&\ge \sum_{n=1}^N w'(x_n, z_n) \Big\{ E_{R_{\psi}(z|x)} [\log Q_{\phi}(x, z) - \log R_{\psi}(z|x)] \Big\} \notag \\
&= \sum_{n=1}^N w'(x_n, z_n) \Big\{ E_{R_{\psi}(z|x)} [\log Q_{\phi}(x|z) + \log Q_0(z) - \log R_{\psi}(z|x)] \Big\} \notag \\
&= \sum_{n=1}^N w'(x_n, z_n) \Big\{ E_{R_{\psi}(z|x_n)} [\log Q_{\phi}(x_n|z) + \log Q_0(z) - \log R_{\psi}(z|x_n)] \Big\} \notag \\
&=\sum_{n=1}^N w'(x_n, z_n) \Big\{ \log Q_\phi (x_n | z'_n) - D_{KL}(R_{\psi}(z|x_n)||Q_0(z)) \Big\} \notag \\
&=\sum_{n=1}^N w'(x_n, z_n) \Big\{ \log Q_\phi (x_n | z'_n) - \bigl( \frac{1}{2}\| \sigma_\psi(x_n) \|_2^2 + \frac{1}{2}\| \mu_\psi(x_n) \|^2_2 - \sum_{i=1}^d\log \sigma_{\psi,i}(x_n)   \bigr) \Big\} \notag \\
&=\tilde{\mathcal{L}_t}
\label{eq:e-step-full}
\end{align}
where $w(x_n, z_n)=\frac{P_{\theta}(x_n|z_n)}{Q_{\phi^{(t)}}(x_n|z_n)}P(\mathcal{S}|x_n)$ is the unnormalized importance weight, $w'(x_n, z_n)=\frac{ w(x_n, z_n)}{\sum_{n=1}^N w(x_n, z_n)}$ is the normalized one, and $N$ is the sample size.  
We use two techniques to generate samples for $z$ and $x$  -- using the real protein sequences $x$ with generated $z$ from $R_\psi(z|x)$ and using $z$ from a standard normal distribution with generated proteins from $Q_\phi(x|z)$. Since $R_\psi(z|x)$ is assumed to be a normal distribution with its mean $\mu_\psi$ and variance $\sigma_\psi^2 \bold{I}$ calculated from a Transformer encoder $\mu_{\psi}, \sigma_{\psi}=\mathrm{Transformer}_\psi(x_n)$, the KL term can be calculated in close form (Eq. \eqref{equation: special_kl}). $d$ is the dimensionality of $\sigma_\psi(x_n)$ and $\sigma_{\psi, i}(x_n)$ is the $i$-th dimension of $\sigma_\psi(x_n)$. $\tilde{\mathcal{L}_t}$ is the lower bound of $\mathcal{L}_t$.\\\\
M-step: 
\begin{equation}
\begin{split}
\psi^{(t+1)}, \phi^{(t+1)} &= \argmax_{\psi, \phi} \tilde{\mathcal{L}_t} \quad \Rightarrow \quad \psi^{(t+1)} = \psi^{(t)} + r * \nabla_{\psi} \tilde{\mathcal{L}_t}, \phi^{(t+1)} = \phi^{(t)} + r * \nabla_{\phi} \tilde{\mathcal{L}_t} \\
\nabla_{\psi} \tilde{\mathcal{L}_t} 
&= \sum_{n=1}^N w'(x_n, z_n) \Big\{ - \nabla_{\psi} \bigl( \frac{1}{2}\| \sigma_\psi(x_n) \|_2^2 + \frac{1}{2}\| \mu_\psi(x_n) \|^2_2 - \sum_{i=1}^d\log \sigma_{\psi,i}(x_n)   \bigr) \Big\} \\
\nabla_{\phi} \tilde{\mathcal{L}_t} &= \sum_{n=1}^N w'(x_n, z_n) \Big\{ \nabla_{\phi} \log Q_\phi (x_n | z'_n) \Big\}
\end{split}
\end{equation}  
$r$ is the learning rate. $\nabla_{\psi} \mathcal{L}_t$ and $\nabla_{\phi} \mathcal{L}_t$ are the gradients of $\psi$ and $\phi$ at the $t$-th iteration, respectively. 

\subsection{IsEM-Pro Algorithm}
\label{training_algorithm}
\begin{algorithm}[ht] 
\caption{IsEM-Pro Model Learning} 
\label{alg::training} 
\begin{algorithmic}[1]
\REQUIRE
$\boldsymbol{\varepsilon}$: separately learned combinatorial structure features through MRFs\\
$P_{\theta}(x,z; \boldsymbol{\varepsilon})$: a pre-trained protein sequence probability model (VAE's decoder augmented with $\boldsymbol{\varepsilon}$)\\
T: number of iteration for importance sampling based EM learning\\
$\mathcal{X}_{\text{data}}$: real protein sequences\\
$N_{data}$: number of real protein sequences\\
$r$: learning rate \\
$\mathcal{B}$: batch size \\
$d$: dimensionality of the latent variable
\ENSURE 
Final proposal model $Q_{\phi^{(T)}}(x|z; \boldsymbol{\varepsilon})$ and variational distribution $R_{\psi^{(T)}}(z|x)$
\STATE set $Q_{\phi^{(0)}}(x|z; \boldsymbol{\varepsilon})=P_{\theta}(x|z; \boldsymbol{\varepsilon})$, \; $R_{\psi^{(0)}}(z|x)=\tilde{P}_\theta(z|x)$
\FOR{t=0 to T-1}
\STATE $D_1=\emptyset$, $D_2=\emptyset$
\FOR{each $x_n\sim \mathcal{X}_{\text{data}}$}
\STATE sample $z_n\sim R_{\psi^{(t)}}(z|x_n)=\mathcal{N(\mu_{\psi}, \sigma^{\text{2}}_{\psi} \boldsymbol{I})}\}$
\STATE $D_1=D_1\cup \{(x_n, z_n)\}$
\ENDFOR
\FOR{i=1 to ($N_{data}*0.1)$}
\STATE sample $z_i\sim \mathcal{N}(0, \boldsymbol{I})$
\STATE sample $x_i\sim Q_{\phi^{(t)}}(x|z_i;\boldsymbol{\varepsilon})$
\STATE $D_2=D_2\cup \{(x_i,z_i)\}$
\ENDFOR
\STATE $D=D_1\cup D_2$, $N=|D|$
\FOR{minibatch $\{(x_n, z_n)\}_{n=1}^\mathcal{B}\subset D$}
\STATE $w(x_n, z_n)=\frac{P_{\theta}(x_n|z_n;\boldsymbol{\varepsilon})}{Q_{\phi^{(t)}}(x_n|z_n;\boldsymbol{\varepsilon})}P(\mathcal{S}|x_n)$, $w'(x_n, z_n)=\frac{ w(x_n, z_n)}{\sum_{n=1}^\mathcal{B} w(x_n, z_n)}$
\STATE sample $z'_n\sim R_{\psi^{(t)}}(z|x_n)=\mathcal{N}(\mu_{\psi}, \sigma^2_{\psi} \boldsymbol{I})$
\STATE $\tilde{\mathcal{L}_t} = \sum_{n=1}^{\mathcal{B}} w'(x_n, z_n) \Big\{ \log Q_{\phi^{(t)}} (x_n | z'_n;\boldsymbol{\varepsilon}) - \bigl( \frac{1}{2}\| \sigma_{\psi^{(t)}}(x_n) \|_2^2 + \frac{1}{2}\| \mu_{\psi^{(t)}}(x_n) \|^2_2 - \sum_{i=1}^d\log \sigma_{\psi^{(t)},i}(x_n)   \bigr) \Big\}$
\STATE update $\psi^{(t)} = \psi^{(t)} + r * \nabla_{\psi} \tilde{\mathcal{L}_t}$
\STATE update $\phi^{(t)} = \phi^{(t)} + r * \nabla_{\phi} \tilde{\mathcal{L}_t}$
\ENDFOR
\STATE update $\psi^{(t+1)} = \psi^{(t)}$
\STATE update $\phi^{(t+1)} = \phi^{(t)}$
\ENDFOR
\end{algorithmic}
\end{algorithm}
\newpage
\section{Approximate KL Divergence} 
\subsection{Proof of the Unbiased and Low-Variance Estimator}
\label{kl_proof}
Letting $r(x)=\frac{P_{\theta}(x|\mathcal{S})}{Q_{\phi} (x)}$, we have:
\begin{equation}
\begin{split}
E_{Q_{\phi} (x)}[(r(x)-1)-\log r(x)] = E_{Q_{\phi}(x)}[\log \frac{Q_{\phi}(x)}{P_{\theta}(x|\mathcal{S})}] 
= D_{KL}(Q_{\phi}(x)||P_{\theta}(x|\mathcal{S}))
\end{split}
\end{equation}
Therefore, this estimator for KL divergence is unbiased.

Let $y=r(x)$ and $f(y)= (y-1) - \log y$, since $f(y)$ is a convex function and it achieves the minimum value when $y=1$, we have:
\begin{equation}
(y-1) - \log y \ge f(1) = 0 
\end{equation}

Thus, $(r(x)-1)-\log r(x)$ is always larger than or equals to $0$.
Instead, in the original KL divergence, $\log \frac{Q_{\phi} (x)}{P_{\theta}(x|\mathcal{S})} = -\log r(x)$  would be negative for half of the samples. Therefore, $E_{Q_{\phi} (x)}[(r(x)-1)-\log r(x)]$ has lower variance compared to the original one.



\subsection{Theoretical Understanding}
\label{theoretical_understanding}
We can prove that under acceptable KL divergence, the samples from the proposal distribution $Q_{\phi}(x)$ can be bounded within a reasonable sampling error with samples from the posterior distribution $P_{\theta}(x|\mathcal{S})$.

\begin{lemma}
If the KL divergence between two distributions P and Q is less than a small positive value $\delta$, then the sampling probability difference between P and Q will be bounded by $\sqrt{2\delta}$ for each sample.
\end{lemma}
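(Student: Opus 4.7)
The statement is essentially a pointwise version of Pinsker's inequality, so my plan is to reduce it to Pinsker's inequality on the total variation distance and then pass from a set-based bound to a pointwise bound.

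First I would recall Pinsker's inequality in its standard form: for any two probability distributions $P$ and $Q$ on the same sample space,
\begin{equation*}
\|P - Q\|_{TV} \;\le\; \sqrt{\tfrac{1}{2} D_{KL}(P\|Q)},
\end{equation*}
where $\|P - Q\|_{TV} = \sup_A |P(A) - Q(A)|$. Applying this with $D_{KL}(P\|Q) < \delta$ gives $\|P-Q\|_{TV} < \sqrt{\delta/2}$, which is already strictly sharper than the target $\sqrt{2\delta}$. I would either cite Pinsker as a known result or, if a self-contained proof is preferred, sketch the classical derivation via the inequality $(p-q)^2 \le \tfrac{1}{3}(2p + q)(p \log(p/q) - p + q)$ combined with Cauchy--Schwarz, which is the textbook route.

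Next I would translate the TV bound into a pointwise bound. For any particular outcome $x$, the singleton $\{x\}$ is an event, so
\begin{equation*}
|P(x) - Q(x)| \;=\; |P(\{x\}) - Q(\{x\})| \;\le\; \|P - Q\|_{TV} \;<\; \sqrt{\tfrac{\delta}{2}} \;\le\; \sqrt{2\delta}.
\end{equation*}
This immediately yields the claimed bound. In the continuous case one would interpret $P(x)$ and $Q(x)$ as density values and instead use $|P(x) - Q(x)|\,dx \le 2\|P-Q\|_{TV}$ integrated against a vanishing neighborhood, giving the same conclusion up to the factor-of-two constant that is absorbed into the loose bound $\sqrt{2\delta}$.

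The only real obstacle is justifying Pinsker's inequality itself, since everything afterwards is a routine domination argument. If the paper wants a fully self-contained proof I would include the two-step derivation (pointwise quadratic bound plus Cauchy--Schwarz); otherwise, a citation suffices and the lemma becomes a two-line corollary. I would also note explicitly whether the intended reading of ``sampling probability difference'' is the pointwise mass difference or the TV distance --- the proof as sketched covers both interpretations with the constant $\sqrt{2\delta}$.
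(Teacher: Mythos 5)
Your proposal is correct and follows essentially the same route as the paper: both invoke Pinsker's inequality to bound the total variation distance by $\sqrt{\delta/2}$ and then pass to the per-sample bound (the paper does this via the $\ell_1$ sum $\sum_x |P(x)-Q(x)| < \sqrt{2\delta}$, you via singleton events, which in fact gives the slightly sharper pointwise constant $\sqrt{\delta/2}$). No gaps.
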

\begin{proof}\renewcommand{\qedsymbol}{}
Let $\delta(P_{\theta}(x|\mathcal{S}), Q_{\phi}(x))$ be the total variation distance between $P_{\theta}(x|\mathcal{S})$ and $Q_{\phi}(x)$. We have:
\begin{equation}
\frac{1}{2}\sum_{x} |P_\theta(x|\mathcal{S})-Q_\phi(x)| = \delta(P_{\theta}(x|\mathcal{S}), Q_{\phi}(x)) \le \sqrt{\frac{1}{2} D_{KL}(P_{\theta}(x|\mathcal{S})||Q_{\phi}(x))} < \sqrt{\frac{1}{2}\delta}
\label{pinsker}
\end{equation}
\end{proof}
The first inequality is due to Pinsker's inequality \cite{csiszar2011information}, which is tight if and only if $P_\theta(x| \mathcal{S})=Q_\phi(x)$. Then there is no difference between sampling from $P_\theta(x| \mathcal{S})$ and $Q_\phi(x)$.

From the above analysis, we can get:
\begin{equation}
\sum_{x} |P_\theta(x|\mathcal{S})-Q_\phi(x)| < \sqrt{2\delta}
\end{equation}
When $\delta$ approaches $0$, the sampling difference between $P_\theta(x|\mathcal{S})$ and $Q_\phi(x)$ would be very minor.

\section{Case Study}
\label{case_study}
Figure \ref{figure_ss} illustrates the complete sequence and secondary structure analyse of our designed protein of avGFP compared with Cytochrome b562 integral fusion with enhanced green fluorescent protein (EGFP). From the figure, we can see that there are much overlap between our designed protein sequence and the chain B of Cytochrome b562 integral fusion with EGFP. It gives empirical evidence that the green fluorescent protein generated by our model is highly likely to be a real protein compared with the proteins we already know. But whether the designed sequences can accelerate wet-lab experiments  still need more exploration as we can not 100\% trust it.

\begin{figure}[ht]
  \centering
  \includegraphics[width=17.0cm]{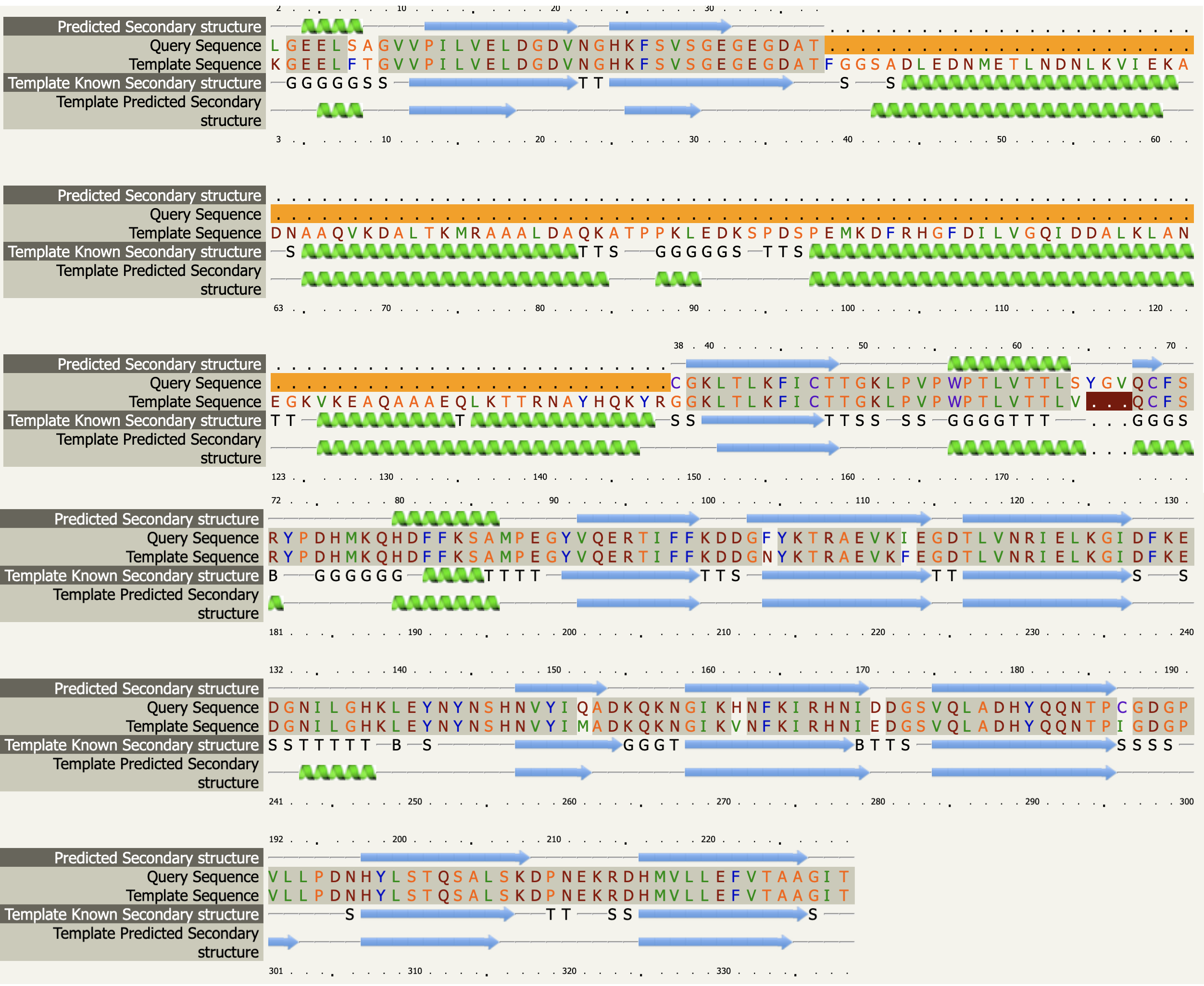}
  \caption{Complete sequence and secondary structure comparison between the designed sequence of avGFP and chain B of Cytochrome b562 integral fusion with with enhanced green fluorescent protein. Green and blue parts respectively represent the Alpha helix and Beta strand.}
  \label{figure_ss}
\end{figure}

\section{Pseudo-Likelihood for Combinatorial Structure Learning}
\label{pseudo_likelihood}
We train the Markov random fields using a pseudo-likelihood as CCMpred \cite{seemayer2014ccmpred} additionally combined with the L1 regularization and L2 regularization. The pseudo-likelihood is given in the following equation:
\begin{equation}
\begin{split}
\hat{P}_{\epsilon}(x) &= \log \Pi_{i=1}^M P_{\epsilon}(x_i|x_1, x_2, ..., x_{i-1}, x_{i+1}, ..., x_M, \boldsymbol{\varepsilon})\\
&= \sum_{i=1}^M \log \frac{\exp(\varepsilon_i(x_i) + \sum_{j=1, j\ne i}^M\varepsilon_{ij}(x_i, x_j))}{\sum_{c\in \mathcal{V}}\exp(\varepsilon_i(c) + \sum_{j=1, j\ne i}^M\varepsilon_{ij}(c, x_j))} \\
&= \sum_{i=1}^M \{\varepsilon_i(x_i) + \sum_{j=1, j\ne i}^M\varepsilon_{ij}(x_i, x_j) - \log Z_i\} \\
Z_i &= \sum_{c\in \mathcal{V}}\exp(\varepsilon_i(c) + \sum_{j=1, j\ne i}^M\varepsilon_{ij}(c, x_j))
\end{split}
\end{equation}
where $\mathcal{V}$ denotes the vocabulary of $20$ amino acids.

\section{Additional Experiments}
\subsection{Conditional VAE}
\label{appendix_cvae}

We provide the results of conditional VAE~(CVAE) in Table~\ref{table_cvae}. We implement the CVAE which is built by adding the fitness condition on deepsequence~\cite{riesselman2018deep}. 
The table shows CVAE exhibits decreased performance. It demonstrates that iterative sampling inside EM algorithm with explicit fitness constraints can lead to better proteins.

\begin{table*}[ht]
\begin{center}
\begin{tabular}{lccccccccc}
\midrule
Dataset & avGFP & AAV & TEM & E4B & AMIE & LGK & Pab1 & Average \\
\midrule
CVAE & $3.570$ &	$4.128$	& $0.083$ &	$-0.819$ &	$-1.564$ &	$-0.796$ &	$1.638$ &	$3.492$	& $1.217$ \\
\model & $6.185$ &	$4.813$ &	$1.850$ &	$5.737$ &	$0.062$ &	$0.035$ &	$2.923$ &	$4.536$	& $\textbf{3.267}$ \\ 
\bottomrule
\end{tabular}
\end{center}
\caption{Fitness of conditional VAE compared with our \model.}
\label{table_cvae}
\end{table*}

\subsection{Non-Autoregressive Decoder}
\label{appendix_nat}
We present a performance comparison between our \model and a fully-connected non-autoregressive decoder in Table~\ref{table_nat}. The non-autoregressive model, a VAE, includes a 6-layer Transformer encoder followed by a linear mapping to learn the mean and variance of the latent variable, and a two-layer non-autoregressive Transformer decoder. The embedding and feed-forward network dimensions are set to 320 and 1280, respectively. The model uses sampled latent vector as decoder input for all positions and is trained using the glancing strategy~\cite{qian2021glancing}. The results show that the non-autoregressive model struggles to generate high-quality candidates for proteins with longer sequences, such as LGK, which has 439 amino acids.

\begin{table*}[ht]
\begin{center}
\begin{tabular}{lccccccccc}
\midrule
Dataset & avGFP & AAV & TEM & E4B & AMIE & LGK & Pab1 & Average \\
\midrule
Non-autoregressive Decoder & $1.312$ &	$3.137$ &	$1.015$	& $3.096$ &	$-3.297$ &	$-3.125$ &	$1.548$	& $2.694$ &	$0.797$ \\
\model & $6.185$ &	$4.813$ &	$1.850$ &	$5.737$ &	$0.062$ &	$0.035$ &	$2.923$ &	$4.536$	& $\textbf{3.267}$ \\ 
\bottomrule
\end{tabular}
\end{center}
\caption{Fitness of a non-autoregressive decoder based VAE model compared with our \model.}
\label{table_nat}
\end{table*}

\end{document}